\tikzset{
  phantom/.style={draw=none,
    commutative diagrams/every label/.append style={/tikz/auto=false}
  },
}
\def\tikzcdset{\pgfqkeys{/tikz/commutative diagrams}}
\def\tikzcd@setanchor#1[#2]#3\relax{%
  \ifx\relax#2\relax\else%
    \tikzcdset{@#1transform/.append style={#2},@shiftabletopath}%
  \fi%
  \ifx\relax#3\relax%
    \pgfutil@namelet{tikzcd@#1anchor}{pgfutil@empty}%
  \else%
    \pgfutil@namedef{tikzcd@#1anchor}{.#3}%
  \fi}
        \def\tikz@tonodes{coordinate[pos=0,commutative diagrams/@starttransform/.try](tikzcd@nodea) %
                          coordinate[pos=1,commutative diagrams/@endtransform/.try](tikzcd@nodeb)}%
      \def\tikztostart{tikzcd@nodea}%
      \tikzset{insert path={(tikzcd@nodea)}}},
      \pgfgetlastxy{\tikzcd@x}{\tikzcd@y}%
      \tikzset{insert path={(tikzcd@nodea)}}}},
\newenvironment{tikzar}[1][]{{}\kern-4pt\begin{tikzcd}[ampersand replacement=\&,#1]}%
{\end{tikzcd}\kern-4pt{}}
\theoremstyle{plain}
\newtheorem{theorem}{Theorem}[section]
\newtheorem{proposition}[theorem]{Proposition}
\newtheorem{lemma}[theorem]{Lemma}
\theoremstyle{definition}
\newtheorem{definition}{Definition}[section]
\theoremstyle{remark}
\newdimen\proofrulebreadth \proofrulebreadth=.05em
\newdimen\proofdotseparation \proofdotseparation=1.25ex
\newdimen\proofrulebaseline \proofrulebaseline=2ex
\let\then\relax
\def\hfi{\hskip0pt plus.0001fil}
\mathchardef\squigto="3A3B
\newif\ifinsideprooftree\insideprooftreefalse
\newif\ifonleftofproofrule\onleftofproofrulefalse
\newif\ifproofdots\proofdotsfalse
\newif\ifdoubleproof\doubleprooffalse
\let\wereinproofbit\relax
\newdimen\shortenproofleft
\newdimen\shortenproofright
\newdimen\proofbelowshift
\newbox\proofabove
\newbox\proofbelow
\newbox\proofrulename
\def\shiftproofbelow{\let\next\relax\afterassignment\setshiftproofbelow\dimen0 }
\def\shiftproofbelowneg{\def\next{\multiply\dimen0 by-1 }%
\afterassignment\setshiftproofbelow\dimen0 }
\def\setshiftproofbelow{\next\proofbelowshift=\dimen0 }
\def\setproofrulebreadth{\proofrulebreadth}
\def\prooftree{
%
\ifnum  \lastpenalty=1
\then   \unpenalty
\else   \onleftofproofrulefalse
\fi
%
\ifonleftofproofrule
\else   \ifinsideprooftree
        \then   \hskip.5em plus1fil
        \fi
\fi
%
\bgroup
\setbox\proofbelow=\hbox{}\setbox\proofrulename=\hbox{}%
\let\justifies\proofover\let\leadsto\proofoverdots\let\Justifies\proofoverdbl
\let\using\proofusing\let\[\prooftree
\ifinsideprooftree\let\]\endprooftree\fi
\proofdotsfalse\doubleprooffalse
\let\thickness\setproofrulebreadth
\let\shiftright\shiftproofbelow \let\shift\shiftproofbelow
\let\shiftleft\shiftproofbelowneg
\let\ifwasinsideprooftree\ifinsideprooftree
\insideprooftreetrue
%
\setbox\proofabove=\hbox\bgroup$\displaystyle 
\let\wereinproofbit\prooftree
%
\shortenproofleft=0pt \shortenproofright=0pt \proofbelowshift=0pt
%
\onleftofproofruletrue\penalty1
}
\def\eproofbit{
%
\ifx    \wereinproofbit\prooftree
\then   \ifcase \lastpenalty
        \then   \shortenproofright=0pt  
        \or     \unpenalty\hfil         
        \or     \unpenalty\unskip       
        \else   \shortenproofright=0pt  
        \fi
\fi
%
\global\dimen0=\shortenproofleft
\global\dimen1=\shortenproofright
\global\dimen2=\proofrulebreadth
\global\dimen3=\proofbelowshift
\global\dimen4=\proofdotseparation
\global\count255=\proofdotnumber
%
$\egroup  
%
\shortenproofleft=\dimen0
\shortenproofright=\dimen1
\proofrulebreadth=\dimen2
\proofbelowshift=\dimen3
\proofdotseparation=\dimen4
\proofdotnumber=\count255
}
\def\proofover{
\eproofbit 
\setbox\proofbelow=\hbox\bgroup 
\let\wereinproofbit\proofover
$\displaystyle
}%
\def\proofoverdbl{
\eproofbit 
\doubleprooftrue
\setbox\proofbelow=\hbox\bgroup 
\let\wereinproofbit\proofoverdbl
$\displaystyle
}%
\def\proofoverdots{
\eproofbit 
\proofdotstrue
\setbox\proofbelow=\hbox\bgroup 
\let\wereinproofbit\proofoverdots
$\displaystyle
}%
\def\proofusing{
\eproofbit 
\setbox\proofrulename=\hbox\bgroup 
\let\wereinproofbit\proofusing
\kern0.3em$
}
\def\endprooftree{
\eproofbit 
  \dimen5 =0pt
%
\dimen0=\wd\proofabove \advance\dimen0-\shortenproofleft
\advance\dimen0-\shortenproofright
%
\dimen1=.5\dimen0 \advance\dimen1-.5\wd\proofbelow
\dimen4=\dimen1
\advance\dimen1\proofbelowshift \advance\dimen4-\proofbelowshift
%
\ifdim  \dimen1<0pt
\then   \advance\shortenproofleft\dimen1
        \advance\dimen0-\dimen1
        \dimen1=0pt
        \ifdim  \shortenproofleft<0pt
        \then   \setbox\proofabove=\hbox{%
                        \kern-\shortenproofleft\unhbox\proofabove}%
                \shortenproofleft=0pt
        \fi
\fi
%
\ifdim  \dimen4<0pt
\then   \advance\shortenproofright\dimen4
        \advance\dimen0-\dimen4
        \dimen4=0pt
\fi
%
\ifdim  \shortenproofright<\wd\proofrulename
\then   \shortenproofright=\wd\proofrulename
\fi
%
\dimen2=\shortenproofleft \advance\dimen2 by\dimen1
\dimen3=\shortenproofright\advance\dimen3 by\dimen4
%
\ifproofdots
\then
        \dimen6=\shortenproofleft \advance\dimen6 .5\dimen0
        \setbox1=\vbox to\proofdotseparation{\vss\hbox{$\cdot$}\vss}%
        \setbox0=\hbox{%
                \advance\dimen6-.5\wd1
                \kern\dimen6
                $\vcenter to\proofdotnumber\proofdotseparation
                        {\leaders\box1\vfill}$%
                \unhbox\proofrulename}%
\else   \dimen6=\fontdimen22\the\textfont2 
        \dimen7=\dimen6
        \advance\dimen6by.5\proofrulebreadth
        \advance\dimen7by-.5\proofrulebreadth
        \setbox0=\hbox{%
                \kern\shortenproofleft
                \ifdoubleproof
                \then   \hbox to\dimen0{%
                        $\mathsurround0pt\mathord=\mkern-6mu%
                        \cleaders\hbox{$\mkern-2mu=\mkern-2mu$}\hfill
                        \mkern-6mu\mathord=$}%
                \else   \vrule height\dimen6 depth-\dimen7 width\dimen0
                \fi
                \unhbox\proofrulename}%
        \ht0=\dimen6 \dp0=-\dimen7
\fi
%
\let\doll\relax
\ifwasinsideprooftree
\then   \let\VBOX\vbox
\else   \ifmmode\else$\let\doll=$\fi
        \let\VBOX\vcenter
\fi
\VBOX   {\baselineskip\proofrulebaseline \lineskip.2ex
        \expandafter\lineskiplimit\ifproofdots0ex\else-0.6ex\fi
        \hbox   spread\dimen5   {\hfi\unhbox\proofabove\hfi}%
        \hbox{\box0}%
        \hbox   {\kern\dimen2 \box\proofbelow}}\doll%
%
\global\dimen2=\dimen2
\global\dimen3=\dimen3
\egroup 
\ifonleftofproofrule
\then   \shortenproofleft=\dimen2
\fi
\shortenproofright=\dimen3
%
\onleftofproofrulefalse
\ifinsideprooftree
\then   \hskip.5em plus 1fil \penalty2
\fi
}
\renewcommand{\to}{\xrightarrow{}}
\newcommand{\tto}[1]{\xrightarrow{#1}}
\newcommand{\mono}{\rightarrowtail}
\newcommand{\epi}{\twoheadrightarrow}
\newcommand{\inclusion}{\hookrightarrow}
\newcommand{\mmono}[1]{\stackrel{#1}\rightarrowtail}
\newcommand{\eepi}[1]{\stackrel{#1}\twoheadrightarrow}
\newcommand{\id}{\mathrm{id}}
\newcommand{\EEE}{{\cal E}}
\newcommand{\KKK}{{\cal K}}
\renewcommand{\Bbb}{\mathbb}
\newcommand{\AAa}{{\Bbb A}}
\newcommand{\BBb}{{\Bbb B}}
\newcommand{\CCc}{{\Bbb C}}
\newcommand{\NNn}{{\Bbb N}}
\mathchardef\gt="313E 
\mathchardef\lt="313C 
 \def\pushright#1{{
    \parfillskip=0pt            
    \widowpenalty=10000         
    \displaywidowpenalty=10000  
    \finalhyphendemerits=0      
   %
    \leavevmode                 
    \unskip                     
    \nobreak                    
    \hfil                       
    \penalty50                  
    \hskip.2em                  
    \null                       
    \hfill                      
    {#1}                        
   %
    \par}}                      
 \def\qed{\pushright{$\square$}\penalty-700 \smallskip}
\qed\end{trivlist}}
\newcommand{\beq}{\begin{equation}}
\newcommand{\eeq}{\end{equation}}
\newcommand{\ba}[1]{\begin{array}{#1}}
\newcommand{\ea}{\end{array}}
\newcommand{\bea}{\begin{eqnarray}}
\newcommand{\eea}{\end{eqnarray}}
\newcommand{\bear}{\begin{eqnarray*}}
\newcommand{\eear}{\end{eqnarray*}}
\newcommand{\Id}{{\rm Id}}
\newcommand{\lft}[1]{\overleftarrow{#1}}
\newcommand{\rgt}[1]{\overrightarrow{#1}}
\newcommand{\ladj}[1]{{#1}^\ast}
\newcommand{\radj}[1]{{#1}_\ast}
\newcommand{\adj}[1]{\ladj {#1} \dashv \radj {#1}}
\newcommand{\lkadj}[1]{{#1}^\flat}
\newcommand{\rkadj}[1]{{#1}_\flat}
\newcommand{\kadj}[1]{\lkadj{#1} \dashv \rkadj {#1}}
\newcommand{\lnadj}[1]{{#1}^\sharp}
\newcommand{\rnadj}[1]{{#1}_\sharp}
\newcommand{\nadj}[1]{\lnadj{{#1}} \dashv \rnadj {{#1}}}
\newcommand{\TKL}{\overleftarrow{\KKK}}
\newcommand{\TEM}{\overleftarrow{\EEE}}
\newcommand{\Klm}[2]{{{#1}}_{\overleftarrow{#2}}}
\newcommand{\Emm}[2]{{{#1}}^{\overleftarrow{#2}}}
\newcommand{\Klc}[2]{{{#1}}_{\overrightarrow{#2}}}
\newcommand{\Emc}[2]{{{#1}}^{\overrightarrow{#2}}}
\newcommand{\Kar}[1]{{#1}^{\circlearrowleft}}
\newcommand{\Karm}[2]{{#1}_{\overset{\looparrowleft}{#2}}}
\newcommand{\Karmex}[2]{{#1}_{\looparrowleft}^{{#2}}}
\newcommand{\Karc}[2]{{#1}^{\looparrowright}_{{#2}}}
\newcommand{\Karco}[2]{{#1}_{\overrightarrow{#2}}^{\circlearrowleft}}
\newcommand{\Karmo}[2]{{#1}_{\overleftarrow{#2}}^{\circlearrowleft}}
\newcommand{\Karme}[2]{{#1}^{\overleftarrow{#2}}_{\circlearrowleft}}
\newcommand{\co}[1]{\overline{#1}}
\begin{document}

\title{Quotients in monadic programming:\\
Projective algebras are equivalent to coalgebras}

\author{Dusko Pavlovic\thanks{Supported by AFOSR and NSF.} \hspace{5em} Peter-Michael Seidel\\
{University of Hawaii, Honolulu HI, USA}}

\date{}

\maketitle

\begin{abstract}
In monadic programming, datatypes are presented as free algebras, generated by data values, and by the algebraic operations and equations capturing some computational effects. These algebras are free in the sense that they satisfy just the equations imposed by their algebraic theory, and remain free of any additional equations.  The consequence is that they do not admit quotient types. This is, of course, often inconvenient. Whenever a computation involves data with multiple representatives, and they need to be identified according to some equations that are not satisfied by all data, the monadic programmer has to leave the universe of free algebras, and resort to explicit destructors. We characterize the situation when these destructors are preserved under all operations, and the resulting quotients of free algebras are also their subalgebras. Such quotients are called \emph{projective}. Although popular in universal algebra, projective algebras did not attract much attention in the monadic setting, where they turn out to have a surprising avatar: 
for any given monad, a suitable category of projective algebras is equivalent with the category of coalgebras for the comonad induced by any monad resolution. For a monadic programmer, this equivalence provides a convenient way to implement polymorphic quotients as coalgebras. The dual correspondence of injective coalgebras and all algebras leads to a different family of quotient types, which seems to have a different family of applications. Both equivalences also entail  several general corollaries concerning monadicity and comonadicity.
\end{abstract}


\section{Introduction}\label{Sec:Intro}
\subsection{The story}
\paragraph{Background: Monadic programming.}
Monads are one of functional programmers' favorite tools, and possibly one of the most popular categorical concepts \cite{BentonN:monads,BentonN:impact}. As a type constructor, a monad gives rise to datatypes that capture not only the data values, but also some computational effects of interest \cite{MoggiE:monads,WadlerP:monads}. While this is achieved using a very simple and convenient set of tools, the history of the underlying ideas is convoluted, and the conceptual and technical background of monadic programming covers enough territory of algebra and of category theory to conceal many mathematical mysteries.

The conceptual origin of monadic programming was probably the idea that data structures can be captured as algebraic theories, which goes back to the early days of semantics of computation \cite{ADJ-intro,ADJ-JACM}. The technical origin of monadic programming was then the idea that algebraic theories can be captured as monads, which goes back even further, to the early days of category theory \cite[Ch.~I]{LintonF:LaJolla,ManesE:algt}. The upshot of the view of data-structures-as-algebraic-theories  is that computational datatypes, as  domains of  inductive and recursive definitions, can be viewed as \emph{initial}, or \emph{free}\/ algebras, implementing induction as a universal property. The upshot of the view of algebraic-theories-as-monads in this context is the fact that monads encapsulate and hide behind their standard structure\footnote{Godement introduced monads under the name \emph{standard construction}, standardizing the sheaf cohomology construction \cite{Godement}.} the diverse and often bewildering sets of algebraic operations, and make them available only through uniformly structured monadic combinators. 

The main feature of computational monads is thus their succinct and elegant rendering of inductive datatypes as \emph{free}\/ algebras. But this main feature is perhaps also their main limitation: the quotients of free algebras are not free algebras. 

\paragraph{Problem: Quotient types.}
Whenever a data value can be given by different representatives, its datatype is a quotient. E.g., each rational number can be represented by infinitely many fractions ($\frac 3 7 = \frac{39}{91} = \frac{273}{637} = \cdots$), so the datatype of rationals is a quotient of the datatype of ordered pairs of integers. Sets are a quotient of bags, bags are a quotient of lists, and so on. Identifying the equivalent representatives can be a hard and important computational task, tackled in type theory from the outset, going back to Martin-L\"of, and still further back to Leibniz. Different applications often justify different implementations \cite{Altenkirch-Ghani,AltenkirchT:POPL16}, which vary from simply carrying explicit equivalence relations with \emph{setoids} \cite{BartheG:setoids,PowerJ:setoid}, through carrying \emph{coherent}\/ equivalences with \emph{groupoids}\/  \cite{Hofmann-Streicher,HofmannM:quotient,AltenkirchT:LICS99}, all the way to the rich structure of \emph{homotopy types}\/ \cite{AwodeyS:LICS-HOTT,hottbook}, where the problem of quotients in type theory and the problem of invariants in geometry seem to be solving each other.

The basic idea of monadic programming, to present datatypes as \emph{free}\/ algebras, precludes direct implementations of quotient types, since a quotient of a free algebra is in general not free. This is often viewed as a feature, since polymorphism requires that all data satisfy exactly the same equations, which for algebras means that they should satisfy just the equations imposed by their algebraic theory, and no additional equations. When necessary, the additional equations can be imposed by explicit destructors, but the polymorphic constructions generally do not carry over to such quotients, unless the destructors preserve them. Under which conditions does that happen?

\paragraph{Solution: Projective algebras.}
In the present paper, we study a special family of quotients of free algebras: those that also happen to be their subalgebras. This means that they can be implemented not only by imposing additional equations, but also by adjoining suitable operations, called \emph{projectors}, as described below. An algebra which is both a quotient and a subalgebra of a free algebra, i.e. its retract, is said to be \emph{projective} \cite[\S 82]{Gratzer:book}\footnote{The name is borrowed from theory of modules, where the retracts of free modules are also called projective.}. Since projectors induce precisely the quotients that are preserved by all functors \cite{PareR:absolute}, this aproach seems necessary and sufficient for extending polymorphic constructions from free algebras to their quotients. The equivalence of projective algebras for a monad and all coalgebras for any of the corresponding comonads, claimed in the title of the paper, suggests a link between the problem of polymorphic quotients in monadic programming and the ideas of \emph{comonadic}\/ programming, put forward by several authors \cite{BrookesGeva,UustaluT:comonadic,UustaluT:container-comonad}.  The dual link of injective algebras for a comonad and all algebras for any of the corresponding monads suggests a link between polymorphic quotients of cofree coalgebras and unrestricted quotients of algebras. We proceed to work out these links.

\subsubsection*{Prerequisites} 
This is a paper about categorical semantics of computation, so the prerequisites are mostly categorical. The main background definitions are reproduced in the Appendix. A very succinct overview of the underlying concepts can be found in \cite[Sec.~I.3]{JohnstoneP:stone}.

\subsection{Motivating example}\label{sec:example}
\label{Sec:interference}
Let $\CCc$ be a cartesian closed category, i.e. given with an adjunction
\beq\label{eq:CCC}
\begin{tikzar}{}
\CCc \arrow[bend left = 13]{rr}{A\times }[swap]{\bot} \&\& \CCc  \arrow[bend left = 13]{ll}{A\Rightarrow} 
\end{tikzar}
\eeq
for every object $A$. Fix an object $S$ as a \emph{state space}, and consider the monad
\bea
\lft S :\  \CCc & \to & \CCc\label{eq:state}\\
X & \mapsto & S\Rightarrow (S\times X) \notag
\eea
induced by the adjunction \eqref{eq:CCC} instantiated to $S$. As explained in \cite{MoggiE:monads}, the category of free $\lft S$-algebras $\Klm \CCc S$ captures computations with explicit state, or with side effects. A computation over  the inputs of type $A$, the outputs of type $B$, and the states of type $S$ is presented as a free algebra homomorphism $f \in \Klm \CCc S(A, B)$, which can be conveniently viewed as a $\CCc$-morphism in the form $A \tto{\ \ f\ \ } S\Rightarrow (S\times B)$  \cite{KleisliH}. This computation thus maps every input $a\in A$ to a function $S\tto{\ f(a)\ } S\times B$, determining at each state $s\in S$ a next state, and an output. Equivalently, such morphisms can be viewed in the transposed form $S\times A \tto{\ \ f'\ \ } S\times B$, assigning to each state and every input a next state and an output. This transposed form of homomorphisms between free algebras will turn out to be more convenient for the purposes of this paper. In the case of the state monad $\lft S$, such homomorphisms capture Mealy machines  \cite[Sec.~2.7(a)]{Mealy-bonsague,Mealy-helle,Hopcroft-Ullman:book}. 

Towards a more concrete example, consider the following model of data release policies from \cite{PavlovicD:privacy}. Suppose that a Mealy machine $S\times A \tto f S\times B$ models a database. $S$ are its states, $A$ are the inputs  (insertions, queries, \ldots) that the users may enter, and $B$ are the outputs supplied by the database. A stateless map $A\tto g B$ can be thought of a rudimentary deterministic channel, just mapping data of type $A$ to data of type $B$. Since there are multiple users,  there may be privacy policies, and authorization policies that need to be implemented. A privacy policy can be viewed as a map $S\times B\tto \psi S\times B$, which projects any output $b\in B$ of the database at a state $s\in S$ to a sanitized, public component $\psi(s,b)$ of the state and the output of the database, and filters out all private data. An authorization policy can be similarly viewed as a map $S\times A\tto \varphi S\times A$, which projects any database state $s\in S$, and any user input $a\in A$ (including the relevant credentials) to the authorized component $\varphi(s,a)$ of the state and the input. Since the public components should not contain any private residue to be filtered out in a second run of $\psi$, and the authorized components should not contain any unauthorized residue, the policies should be idempotent, i.e. satisfy the equations
\[\varphi\circ \varphi = \varphi \qquad \qquad \psi\circ \psi = \psi\]
Such equations define projectors.  There are at least two different ways to interpret projectors as policies. One is to view them as \emph{policy specifications}. The database $S\times A\tto f S\times B$ then implements the policies $S\times B\tto \psi S\times B$ and $S\times A\tto \varphi S\times A$ if it satisfies the equation
\bea\label{eq:compliant}
f & = & \psi \circ f \circ \varphi
\eea
which is easily seen to be equivalent to the pair of equations
\bea\label{eq:compliant-too}
  \psi\circ f\  =\  f\  =\  f\circ \varphi  
  \eea
In other words, a \emph{compliant}\/ database only ever supplies public data, and only ever permits authorized requests.

A different view to interpret projectors as policies is to view them as \emph{policy implementations}. The database $S\times A\tto f S\times B$ then does not implement the policies itself, but needs to be precomposed with the authorization policy $S\times A\tto \varphi S\times A$ and  postcomposed with the privacy policy $S\times B\tto \psi S\times B$. However, since each of these policies regulates the same data release, the database $S\times A\tto f S\times B$ is \emph{consistent}\/ with the policies $S\times B\tto \psi S\times B$ and $S\times A\tto \varphi S\times A$ if it satisfies the equation
\bea\label{eq:consistent}
\psi \circ f & = & f \circ \varphi
\eea
It is obvious that compliance implies consistency. A consistent database, however, does not have to be compliant. The reason is that a consistent database does not have to implement the policies itself, but it requires separate policy implementations at the input and at the output. On its own, such a database may accept unauthorized requests and it may supply private data. Its consistency means that \emph{if}\/ we make sure that no unauthorized requests are submitted, \emph{then}\/ we can be sure that no private responses will be supplied, and \emph{vice versa}. More precisely, a database $f$ is consistent with the policies $\varphi$ and $\psi$ whenever a request consistent with $\varphi$ results in a response consistent with $\psi$, and all responses consistent with $\psi$ can be obtained on requests consistent with $\varphi$. Since the two policies thus precisely enforce each other on a database consistent with them, they implement the same data release process on this database, which can thus be implemented either as an authorization policy, or as a privacy policy. 

Lifting this distinction between compliance and consistency from databases and state monads, to a distinction between two types of homomorphisms between projective algebras, we arrive at the main results of the paper.

\subsection{The setting and the result}
Every adjunction $\adj F: \BBb\to \AAa$ determines 
\begin{itemize}
\item the monad $\lft F = \radj F \ladj F : \AAa \to \AAa$, with the induced categories of
\begin{itemize}
\item free algebras $\Klm \AAa F$, 
\item projective algebras $\Karmo \AAa F$, and
\item all algebras $\Emm \AAa F$,
\end{itemize}
and on the other hand
\item the comonad $\rgt F = \ladj F \radj F: \BBb \to \BBb$, with the induced categories of
\begin{itemize}
\item cofree coalgebras $\Klc \BBb F$, 
\item injective coalgebras $\Karco \BBb F$, and
\item all coalgebras $\Emc \BBb F$.
\end{itemize}
\end{itemize}
The other way around, given a monad $\lft F :\AAa \to \AAa$, any adjunction $\adj F : \BBb \to \AAa$ such that $\lft F = \radj F \ladj F$ is a \emph{resolution}\/ of $\lft F$; and given a comonad $\rgt F :\BBb \to \BBb$, any adjunction such that $\rgt F = \ladj F \radj F$ is a resolution of $\rgt F$. Each of the above categories defined for a monad (resp. for a comonad) gives its resolution. The definitions are standard, and can be found in the Appendix. In this paper, we introduce the categories of
\begin{itemize}
\item projective algebras with \emph{consistent}\/ morphisms $\Karm \AAa F$, and
\item injective coalgebras with \emph{consistent}\/ morphisms $\Karc \BBb F$.
\end{itemize}
We prove the following equivalences of categories
\beq\label{eq:eqs}
\Emm \AAa F \simeq \Karc \BBb F \qquad \qquad \Emc \BBb F \simeq \Karm \AAa F 
\eeq
under the assumption that the categories $\AAa$ and $\BBb$ are Cauchy\footnote{The habit of attributing categorical concepts to XIX century mathematicians and philosophers has styled the terminology in some parts of category theory.} complete, which means that they split idempotents. This is a very mild assumption, since the Cauchy completion, the weakest of all categorical completions, is easy to construct for any category, as spelled out in Prop.~\ref{prop:splitting}.

\subsection*{Overview of the paper}
Sec.~\ref{Sec:AlgCoalg} begins with a discussion about projectors, and analyzes projectors over free algebras, which determine projective algebras. In Sec.~\ref{Sec:ProjCoalg} we state and prove the main theorem, showing that projectors over free algebras correspond to coalgebras. We also state the dual version, which says that projectors over cofree coalgebras correspond to algebras. In Sec.~\ref{Sec:Application}, we return to the motivating example from Sec.~\ref{sec:example}, and analyze the different categorical formalisations of data release policies. Sec.~\ref{Sec:Closing} closes the paper with comments about the related past work, and about the future work.

\section{Projectors over algebras and coalgebras}\label{Sec:AlgCoalg}

\subsection{Projectors in general}
Consider an equalizer and coequalizer diagram
\[\begin{tikzar}{}
 E \ar[tail]{r}{i} \& A \ar[shift left = .75ex]{r}{\varphi} \ar[shift right = .75ex]{r}[swap]{\id} \& A \ar[two heads]{r}{q} \&
 Q
\end{tikzar}
\]
for an arbitrary endomorphism $\varphi$. Intuitively, the equalizer $E$ consists of the \emph{fixed points}\/ of $\varphi$, whereas the coequalizer $Q$ is the quotient where each element of $A$ is identified with all of its direct and inverse images along $\varphi$, which together form its \emph{orbit}. The obvious map $E\to Q$ maps each fixed point into a unique orbit; but some orbits may not contain any fixed points. We are interested in the situation when each orbit does contain a fixed point, so that each equivalence class from $Q$ has a canonical representative in $E$. This means that the iterated applications of $\varphi$ push each element of $A$ along its orbit towards a fixed point. It can be shown that this situation is characterized by the requirement that the following countably extended diagram commutes
\[
\begin{tikzar}{}
A \ar[shift left = .75ex]{r}{\varphi} \ar[shift right = .75ex]{r}[swap]{\id} \& A \ar{r}{\varphi} \& A \ar{r}{\varphi} \& \cdots
\end{tikzar}
\]
In terms of elements, this means that for every $x\in A$ there is some $n\in \NNn$ such that $\varphi^{n+1}(x) = \varphi^n(x)$. In other words, $\varphi$ thus equips $A$ with the structure of a \emph{forest}, where the equivalence classes that form $Q$ are the component trees, and the elements of $E$ are their roots. Projectors are the special case of this situation, where already the diagram 
\[\begin{tikzar}{}
A \ar[shift left = .75ex]{r}{\varphi} \ar[shift right = .75ex]{r}[swap]{\id} \& A\ar{r}{\varphi} \& A \end{tikzar}\]
commutes. The forest thus reduces to a shrub, where each component may branch at the root as wide as it likes, but can only grow one layer tall, and cannot grow any further branches. The following summarizes \cite[Sec.~IV.7.5]{GrothendieckA:SGA4}.

\begin{proposition}\label{prop:projectors}
For any endomorphism $\varphi$ with an epi-mono factorization
\[
\begin{tikzar}{}
A \arrow{rr}{\varphi} \arrow[twoheadrightarrow]{dr}[swap]{q} \&\& A\\ \& B \arrow[rightarrowtail]{ur}[swap]{i} 
\end{tikzar}
\]
the following statements are equivalent:
\begin{enumerate}[(a)]
\item $\varphi\circ \varphi = \varphi$
\item $q\circ i = \id$
\item $i$ is an equalizer and $q$ is a coequalizer of $\varphi$ and the identity
\[\begin{tikzar}{}
\& B \ar[tail]{dl}[swap]{i} 
\\ A \ar[shift left = .75ex]{rr}{\varphi} \ar[shift right = .75ex]{rr}[swap]{\id} \&\& A \ar[two heads]{ul}[swap]{q} 
\end{tikzar}
\]
\end{enumerate}
\end{proposition}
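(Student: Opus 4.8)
The plan is to prove the cycle $(a) \Leftrightarrow (b) \Rightarrow (c) \Rightarrow (a)$, in which the equivalence $(a)\Leftrightarrow(b)$ carries the whole algebraic content and the remaining implications are routine unwindings of the universal properties of equalizers and coequalizers. Throughout I substitute the given factorization $\varphi = i\circ q$ and exploit that $q$ is epic and $i$ is monic.

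For $(a)\Leftrightarrow(b)$, I rewrite $\varphi\circ\varphi = \varphi$ as $i\circ q\circ i\circ q = i\circ q$. Cancelling the monic $i$ on the left and the epic $q$ on the right turns this into $q\circ i = \id$, which is exactly $(b)$. Conversely, inserting $q\circ i = \id$ into the middle of $\varphi\circ\varphi = i\circ(q\circ i)\circ q$ collapses the right-hand side to $i\circ q = \varphi$, recovering $(a)$.

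For $(b)\Rightarrow(c)$, I first record the two identities $\varphi\circ i = i\circ(q\circ i) = i$ and $q\circ\varphi = (q\circ i)\circ q = q$, which say precisely that $i$ equalizes the pair $(\varphi,\id)$ and $q$ coequalizes it. For the universal property of the equalizer, given any $g$ with $\varphi\circ g = g$ I would produce the factorization $h = q\circ g$, since then $i\circ h = \varphi\circ g = g$, with uniqueness forced by the monic $i$. Dually, given $g$ with $g\circ\varphi = g$, the map $h = g\circ i$ factors $g$ through $q$, as $h\circ q = g\circ\varphi = g$, with uniqueness forced by the epic $q$. This establishes $(c)$.

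For $(c)\Rightarrow(a)$ I need only the equalizer half: if $i$ equalizes $\varphi$ and $\id$, then $\varphi\circ i = i$, i.e. $i\circ q\circ i = i$, and cancelling the monic $i$ yields $q\circ i = \id$, which is $(b)$ and hence $(a)$. There is no serious obstacle here; the only point demanding care is the bookkeeping in the universal-property verifications, namely checking that the candidate factorizations $q\circ g$ and $g\circ i$ land in the correct objects and are unique. It is perhaps worth remarking that each half of $(c)$ already recovers $(b)$ on its own, so the equalizer and coequalizer conditions are not logically independent --- a mild redundancy that the symmetry of the argument makes transparent.
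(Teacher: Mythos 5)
Your proof is correct and complete: the cancellation argument for $(a)\Leftrightarrow(b)$, the explicit factorizations $q\circ g$ and $g\circ i$ for the universal properties in $(b)\Rightarrow(c)$, and the monic-cancellation for $(c)\Rightarrow(a)$ all check out. The paper itself gives no proof, deferring to SGA4, and your argument is exactly the standard one being summarized there, so there is nothing further to compare.
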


\begin{definition}\label{def:splitting}
An endomorphism $A\tto\varphi A$ is called a \emph{projector}\/ (or \emph{idempotent}) if $\varphi\circ \varphi = \varphi$. Its \emph{splitting}\/ is an object $B$ with a pair of arrows $\begin{tikzar}{}
A \arrow[bend left = 15,twoheadrightarrow]{r}{q}  \&  B \arrow[bend left = 15,rightarrowtail]{l}{i} 
\end{tikzar}$ such that $i\circ q = \varphi$ and  $q\circ i = \id$. More succinctly, we write $A\stackrel{\varphi}\looparrowright B$.
\end{definition}
Since the projectors and their splittings are defined by equations, every functor must preserve them. Since a splitting consists of an equalizer and a coequalizer, it is an equalizer and a coequalizer that must be preserved by all functors.
\begin{definition}
A categorical property is called \emph{absolute}\/ when it is preserved by all functors. A category that has all absolute limits and absolute colimits is said to be \emph{Cauchy complete}.
\end{definition}
It follows from the results of \cite{PareR:absolute}, as well as from the different approach in \cite[Sec.~I.6.5]{BorceuxF:handbook}, that all absolute limits and colimits boil down to splittings.
\begin{proposition}\label{prop:splitting}
For any category $\CCc$ the following statements are equivalent\begin{enumerate}[(a)]
\item $\CCc$ is Cauchy complete,
\item all projectors split in $\CCc$,
\item the obvious embedding $\CCc \inclusion \Kar \CCc$ is an equivalence, where
\end{enumerate}
\bear
\rvert\Kar \CCc\rvert & = & \coprod_{A\in |\CCc|} \{A\tto\varphi A\ |\ \varphi\circ \varphi = \varphi\} \\
\Kar\CCc (A\tto\varphi A, B\tto\psi B) & = & \left\{f\in \CCc(A,B)\ \Big|\ \ \ \begin{tikzar}[row sep=1.8em,column sep=1.8em]
 A \ar{r}{f} \ar{d}{\varphi} \& B  \\ 
 A \ar{r}{f}\& B \ar{u}{\psi}
\end{tikzar}\ \right\}
\eear
\end{proposition}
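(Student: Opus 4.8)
The plan is to establish the two equivalences (a)$\Leftrightarrow$(b) and (b)$\Leftrightarrow$(c) separately, since the first is essentially a matter of citation while the second requires an explicit analysis of the Karoubi envelope $\Kar\CCc$. For (a)$\Leftrightarrow$(b) I would simply invoke the remark preceding the statement: Cauchy completeness is the existence of all absolute limits and colimits, and by the results of \cite{PareR:absolute} (and the approach of \cite{BorceuxF:handbook}) these all reduce to splittings of idempotents. Indeed, Prop.~\ref{prop:projectors}(c) already exhibits each splitting simultaneously as an absolute equalizer and an absolute coequalizer of $\varphi$ and $\id$, and conversely every absolute limit or colimit is built from such splittings; hence $\CCc$ has all absolute (co)limits iff every projector splits.

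The substance is (b)$\Leftrightarrow$(c), and I would begin by recording two preliminary facts about $\Kar\CCc$. First, the defining condition $f = \psi\circ f\circ\varphi$ on a morphism $f\colon(A,\varphi)\to(B,\psi)$ is equivalent to the conjunction $\psi\circ f = f = f\circ\varphi$ (multiply on either side by the idempotents and use idempotency); from this it follows that composition in $\Kar\CCc$ is just composition in $\CCc$, and, crucially, that the identity morphism of an object $(A,\varphi)$ is $\varphi$ itself rather than any externally supplied identity. Second, these facts make the embedding $\CCc\hookrightarrow\Kar\CCc$, $A\mapsto(A,\id)$, automatically full and faithful, so that it is an equivalence precisely when it is essentially surjective --- that is, when every $(A,\varphi)$ is isomorphic in $\Kar\CCc$ to some $(B,\id)$.

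With this in place, (b)$\Rightarrow$(c) is almost immediate. Given a projector with splitting $A\stackrel{\varphi}\looparrowright B$ via $q,i$ (so $i\circ q=\varphi$ and $q\circ i=\id$), I would read $q$ as a morphism $(A,\varphi)\to(B,\id)$ and $i$ as a morphism $(B,\id)\to(A,\varphi)$; both conditions hold, since $q\varphi = q(iq)=(qi)q=q$ and $\varphi i=(iq)i=i(qi)=i$. Then $q\circ i=\id$ is the identity of $(B,\id)$, while $i\circ q=\varphi$ is exactly the identity of $(A,\varphi)$ by the first preliminary fact, so $q$ and $i$ are mutually inverse in $\Kar\CCc$ and every object lies in the essential image. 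For (c)$\Rightarrow$(b) I would run the argument backwards: an isomorphism $(B,\id)\cong(A,\varphi)$ provided by essential surjectivity unpacks, again via the first fact, into $\CCc$-arrows $q,i$ satisfying $q\circ i=\id$ and $i\circ q=\varphi$, i.e.\ a splitting of $\varphi$.

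The only step where care is genuinely needed --- and the one most likely to derail the bookkeeping --- is the identification $\id_{(A,\varphi)}=\varphi$. This is what makes the two splitting equations $i\circ q=\varphi$ and $q\circ i=\id$ correspond exactly to the two equations asserting that $q$ and $i$ are inverse isomorphisms in $\Kar\CCc$; without it the correspondence between splittings in $\CCc$ and isomorphisms onto the image of the embedding is obscured. Once that identification is made explicit, both directions of (b)$\Leftrightarrow$(c) collapse to the short idempotent manipulations above.
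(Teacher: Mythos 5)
Your proposal is correct; note that the paper itself supplies no proof of this proposition at all, relegating it entirely to the citations of Par\'e and of Borceux that precede the statement. For the equivalence (a)$\Leftrightarrow$(b) you do exactly what the paper does, namely invoke the cited result that all absolute (co)limits reduce to splittings of idempotents, so you inherit whatever is being glossed over there (in particular the precise sense in which ``having all absolute limits'' forces the equalizer of $\varphi$ and $\id$ to exist for an idempotent $\varphi$). The substantive content you add is the explicit verification of (b)$\Leftrightarrow$(c), and it is sound: the key observation that the identity of $(A,\varphi)$ in $\Kar\CCc$ is $\varphi$ itself (since $\id_A$ is generally not a $\Kar\CCc$-endomorphism of $(A,\varphi)$) is exactly what makes the pair of equations $i\circ q=\varphi$, $q\circ i=\id_B$ translate verbatim into ``$q$ and $i$ are mutually inverse in $\Kar\CCc$,'' and your computations $q\varphi=q$ and $\varphi i=i$ confirm that $q$ and $i$ are legitimate $\Kar\CCc$-morphisms. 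Combined with the (easy) fullness and faithfulness of $A\mapsto(A,\id)$, reducing the equivalence to essential surjectivity is the standard and correct route, and the converse direction unpacks an isomorphism $(B,\id)\cong(A,\varphi)$ into a splitting exactly as you say. In short: where the paper cites, you cite; where the paper is silent, you give a complete and correct elementary argument.
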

The absolute completion $\Kar \CCc$ is sometimes called \emph{Karoubi envelope}\/ of $\CCc$ \cite[Sec.~IV.7.5]{GrothendieckA:SGA4}. Two categories are \emph{Morita equivalent}\/ when their Cauchy completions are equivalent \cite[Thm.~7.9.4]{BorceuxF:handbook}. Note that the condition $\psi \circ f\circ \varphi = f$ is equivalent with the requirement that both $f\circ \varphi = f$ and $\psi \circ f = f$ are valid.

\paragraph{\sc \textbf{Assumption.}} In the rest of this paper, we assume that each of the categories under consideration is \textbf{\emph{Cauchy complete}}, i.e. that projectors split in it. Any category $\CCc$ that does not fulfill this assumption should be replaced by its Karoubi envelope $\Kar\CCc$, described in Prop.~\ref{prop:splitting}\textit{(c)}.

\subsection{Projective algebras over free algebras}
In homological algebra, projective modules are usually defined as direct summands of free modules. In the terminology of the preceding section, this means that they arise by \emph{splitting the projectors}\/ over free modules. 
It is natural to define projective algebras in a similar way: as projectors over free algebras \cite[\S 82]{Gratzer:book}.

In the usual (Kleisli) view of the category of free algebras \cite{KleisliH}, reproduced in the Appendix, a morphism $f\in \Klm \AAa F(A,B)$ is a morphism $A \tto f \lft F B$ in $\AAa$, and its composition with $g\in \Klm \AAa F(B,C)$, which is $B\tto g \lft F C$ in $\AAa$ is defined by
\bear
g\circ_{\lft F} f & = & \left(A\tto{\ f\ } \lft FB \tto{\lft F g} \lft F\lft F C \tto{\ \mu\ } \lft F C\right) \ \
 \in \ \  \Klm \AAa F (A,C)
\eear
A projector $\varphi \in \Klm \AAa F(A, A)$ over the free algebra generated by $A$ is thus an $\AAa$-morphism $A\tto \varphi \lft F A$ such that $\varphi \circ_{\lft F} \varphi = \mu_A \circ \lft F \varphi \circ \varphi = \varphi$. 
As they expand, the calculations with projectors in the Kleisli form of category $\Klm \AAa F$ do get increasingly clumsy.

When a monad $\lft F:\AAa \to \AAa$ is induced by an adjunction $\adj F :\BBb \to \AAa$ so that $\lft F = \radj F \ladj F$, then the category of free algebras can be equivalently defined by
\bea\notag
|\Klm \AAa F| & = & |\AAa|\\
\Klm \AAa F (X, Y) & = & \BBb(\ladj F X, \ladj F Y)\label{def:kleis}
\eea
It is easy to check that the natural bijections 
\bear
\AAa(X, \radj F\ladj F Y) & \cong &  \BBb(\ladj F X, \ladj F Y)
\eear
map the Kleisli composition of the morphisms in $\AAa(X, \lft F Y)$ to the ordinary composition of their adjunction transposes in $\BBb(\ladj F X, \ladj F Y)$. When the homomorphisms between free $\lft F$-algebras are presented as the elements of $\BBb(\ladj F X, \ladj F Y)$,  then a projector $\varphi \in \Klm \AAa F(X, X)$ is just a $\BBb$-morphism $\ladj F X\tto \varphi \ladj F X$ such that $\varphi\circ \varphi = \varphi$. The category of projective $\lft F$-algebras and \emph{compliant}\/ homomorphisms (explained in Sec.~\ref{sec:example}) is thus defined as follows:
\bea
|\Karmo \AAa F| & = & \coprod_{X\in |\AAa |} \Big\{\varphi \in \BBb(\ladj F X, \ladj F X)\ \Big| \ \varphi\circ \varphi = \varphi\Big\}\notag \\
\Karmo \AAa F (\varphi, \psi) & = & \left\{h \in \BBb(\ladj FX, \ladj F Y)\ \Big|\ \ 
\begin{tikzar}[row sep=1.8em,column sep=1.8em]
\ladj F X \ar{r}{h} \ar{d}{\varphi} \& \ladj F Y  \\ 
\ladj F X \ar{r}{h}\& \ladj F Y \ar{u}{\psi}
\end{tikzar}\ 
\right\}\label{eq:compliant-form}
\eea
where $\psi\in  \BBb(\ladj F C, \ladj F C)$ is another projective algebra, viewed as a projector over the free algebra generated by $C\in |\AAa|$. 

We note that the category $\Klm \AAa F$, defined in \eqref{def:kleis}, is clearly isomorphic with the usual Kleisli category, recalled in Appendix A. It follows that the category $\Klm \AAa F$, with its projectors, its compliant homomorphisms, and its consistent homomorphisms in the next section, only depends on the monad $\lft F$, and not on the resolution $\adj F$. The category $\BBb$ and the adjoint $\ladj F$ are used in the above definitions only for convenience. The relevant concepts could be equivalently defined within in the standard Kleisli category, and that definition is in fact the special case of the above, for the Kleisli resolution from Def.A.4. But our results would look seem significantly more complicated in that framework.

\subsection{Projective algebras among all algebras}
The category of free $\lft F$-algebras $\Klm \AAa F$ embeds fully and faithfully into the category $\Emm \AAa F$ of all $\lft F$-algebras by the functor
\beq\label{eq:M}
\begin{tikzar}[row sep=1.5em,column sep=4em]
\Klm \AAa F \arrow[hook]
{r}{M} \& \Emm \AAa F  \end{tikzar}
\eeq
defined by
\[
\prooftree
X \in \big|\Klm \AAa F\big|
\justifies
\left(\lft F \lft F X\tto \mu \lft F X\right) \in \big|\Emm \AAa F\big|
\endprooftree
\quad \mbox{and} \quad  
\prooftree
\left(\ladj F X \tto{h} \ladj F Y\right) \in \Klm \AAa F (X, Y)
\justifies
\left(\lft F X \tto{\radj F h}\lft F Y\right)\ \in\ \Emm \AAa F\big(\mu_X, \mu_Y\big)
\endprooftree
\]
where $\radj F h$ is an algebra homomorphism because $\mu = \radj F \varepsilon$, and the naturality of $\varepsilon$ thus implies $\radj F h \circ \mu_X = \mu_Y \circ \lft F \radj F h$. Since the projectors in $\Emm \AAa F$ split whenever they split in $\AAa$, as assumed here, the embedding $M$ has a unique extension $M^\circlearrowleft$,
\beq\label{eq:Mcircle}
\begin{tikzar}[row sep=1.5em,column sep=4em]
\Karmo \AAa F \arrow[hook,dashed]
{r}{M^\circlearrowleft} \& \Emm \AAa F \\
\Klm \AAa F \arrow[hook]
{ur}[swap]{M} \arrow[hook]{u} \end{tikzar}
\eeq
which maps each $\varphi \in \lvert \Karmo \AAa F\rvert$ to a splitting $\alpha$ of the projector $\radj F \varphi \in \Emm \AAa F(\mu_X, \mu_X)$
\beq\label{eq:Kvarphi}
\begin{tikzar}{}
\lft F \lft F X \ar[two heads]{r}{\lft F q} \arrow{d}[swap]{\mu} 
\& \lft F A \arrow{d}{\alpha}
\ar[tail]{r}{\lft F i}  \& \lft F \lft F X \arrow{d}{\mu}
\\
\lft F X \ar[bend right = 20]{rr}[swap]{\radj F \varphi}
\ar[two heads]{r}{q}  
\& A  \ar[tail]{r}{i}   \&  \lft FX
\end{tikzar}
\eeq
%
%
%
An $\lft F$-algebra $\lft F A \tto \alpha A$ is thus projective if it is a retract of some free algebra $\lft F\lft FX \tto\mu \lft FX$, i.e. if there are an $X\in |\AAa|$ and some homomorphisms $q \in \Emm \AAa F (\mu_X, \alpha)$ and $i \in \Emm\AAa F(\alpha, \mu_X)$ such that $q\circ i = \id$.

It turns out, however, that each projective algebra $\lft F A \tto \alpha A$ is not just a retract of a free algebra over some $X\in |\AAa|$, but a retract of the free algebra over its own carrier $A$. 

\begin{proposition}\label{prop:charProjective}
An algebra $\lft F A \tto \alpha A$ is projective if and only if there is a unique algebra homomorphism $\co\alpha\in  \Emm\AAa F(\alpha, \mu_A)$ such that $\alpha \circ \co\alpha = \id$.
\[
\begin{tikzar}{}
\lft F A \ar[tail]{r}{\lft F\co\alpha} \arrow{d}[swap]{\alpha}  
\&  \lft F \lft F A \ar[two heads]{r}{\lft F \alpha} \arrow{d}{\mu_A} 
\& \lft F A \arrow{d}{\alpha} 
\\
A \ar[tail]{r}[swap]{\co\alpha}   
\&  \lft F A 
\ar[two heads]{r}[swap]{\alpha}  
\& A 
\end{tikzar}
\]
\end{proposition}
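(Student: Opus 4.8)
The plan is to prove the two implications separately, dispatching the backward direction and existence by short diagram chases and concentrating the effort on uniqueness.

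For the ``if'' direction I would first note that the structure map is always a homomorphism $\alpha\in\Emm\AAa F(\mu_A,\alpha)$, since the associativity axiom $\alpha\circ\mu_A=\alpha\circ\lft F\alpha$ is literally the homomorphism condition for $\alpha\colon(\lft F A,\mu_A)\to(A,\alpha)$. Hence a homomorphism $\co\alpha\in\Emm\AAa F(\alpha,\mu_A)$ with $\alpha\circ\co\alpha=\id$ exhibits $(A,\alpha)$ as a retract, inside $\Emm\AAa F$, of the free algebra $(\lft F A,\mu_A)$ on its own carrier, and by the characterization of projective algebras as retracts of free algebras (the extension $M^\circlearrowleft$ of \eqref{eq:Mcircle}) this makes $(A,\alpha)$ projective.

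For the ``only if'' direction I would start from a witnessing retraction of \emph{some} free algebra: an $X\in|\AAa|$ together with homomorphisms $q\in\Emm\AAa F(\mu_X,\alpha)$ and $i\in\Emm\AAa F(\alpha,\mu_X)$ with $q\circ i=\id$. The natural candidate is $\co\alpha:=\lft F(q\circ\eta_X)\circ i$, which transports along the retraction the canonical section $\lft F\eta_X$ splitting the free algebra $\mu_X$. Checking $\alpha\circ\co\alpha=\id$ then uses only that $q$ is a homomorphism, $\alpha\circ\lft F q=q\circ\mu_X$, together with the unit law $\mu_X\circ\lft F\eta_X=\id$; and the homomorphism condition $\co\alpha\circ\alpha=\mu_A\circ\lft F\co\alpha$ uses that $i$ is a homomorphism, $i\circ\alpha=\mu_X\circ\lft F i$, together with the naturality of $\mu$ at $q\circ\eta_X$. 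Both are one-line chases. Equivalently, the displayed diagram can be read as the assertion that $\co\alpha$ and $\lft F\co\alpha$ split the canonical presentation $\lft F\lft F A\rightrightarrows\lft F A\tto{\alpha}A$ of $(A,\alpha)$ as a (necessarily absolute) coequalizer of $\mu_A$ and $\lft F\alpha$.

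The hard part will be uniqueness, which is the genuinely delicate point rather than existence. My plan is to reduce it to the uniqueness of an equalizer: any homomorphism section $\co\alpha$ turns $e:=\co\alpha\circ\alpha$ into an idempotent endomorphism of $(\lft F A,\mu_A)$, and by Proposition~\ref{prop:projectors} the section $\co\alpha$ is then recovered as the equalizer of $e$ and $\id$, so that two sections inducing the same idempotent must coincide. The crux is therefore to show that this idempotent $\co\alpha\circ\alpha$ is forced, independently of the chosen section; this is where the homomorphism condition into the \emph{free} algebra, rather than into an arbitrary one, and the standing Cauchy-completeness of $\AAa$ must be used in an essential way, and it is the step where I expect the argument to demand the most care.
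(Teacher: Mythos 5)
Your ``if'' direction and your existence argument coincide with the paper's proof: the candidate $\co\alpha=\lft F(q\circ\eta_X)\circ i$ is exactly the composite $\lft Fq\circ\lft F\eta\circ i$ the paper uses, and the two one-line chases you describe are the ones it carries out. The observation that $\alpha$ itself is a homomorphism $\mu_A\to\alpha$ by associativity, so that any homomorphic section exhibits $(A,\alpha)$ as a retract of $\mu_A$, correctly disposes of the converse.

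The gap is in uniqueness, and you have located it precisely but not closed it: your plan reduces uniqueness to showing that the idempotent $\co\alpha\circ\alpha$ on $\mu_A$ is independent of the chosen homomorphic section, and you explicitly defer that step. It cannot be completed, because the idempotent is \emph{not} forced. Take the list monad on $\Set$ and let $A$ be the free monoid on one generator $x$ (which is projective). An element of $\Emm\AAa F(\alpha,\mu_A)$ is a monoid homomorphism from $A$ into the free monoid $\lft FA$ on the underlying set of $A$, hence is determined by a single word $w\in\lft FA$ via $x^n\mapsto w^n$, and the section condition $\alpha\circ\co\alpha=\id$ says only that $\alpha(w)=x$. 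Both $w=[x]$ and $w=[x^0,x]$ qualify, giving distinct homomorphic sections that induce distinct idempotents $\co\alpha\circ\alpha$. For what it is worth, the paper's own uniqueness argument has the same lacuna: it shows that two sections splitting the \emph{same} projector with the same epi part $\alpha$ must agree (``since $\alpha$ is an epi, $\co\alpha_0\circ\alpha=\co\alpha_1\circ\alpha$ implies $\co\alpha_0=\co\alpha_1$''), but never establishes that two homomorphic sections of $\alpha$ split the same projector. So your instinct that this is the delicate point is exactly right; the honest conclusion is that only existence of the canonical $\co\alpha$ built from a chosen splitting is provable, and the word ``unique'' in the statement should be dropped or the section $\co\alpha$ carried as extra structure, as the coproduct form of \eqref{eq:EM-proj} in effect does.
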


\begin{proof} To construct $\co\alpha\in  \Emm\AAa F(\alpha, \mu_A)$, extend the algebra homomorphism $q\in \Emm \AAa F(\mu_X, \alpha)$ to $\lft F(q\circ \eta) \in \Emm \AAa F(\mu_X, \mu_A)$ and precompose with $i \in \Emm \AAa F(\alpha, \mu_X)$. Hence $\co\alpha=\lft F q \circ \lft F \eta \circ i \in \Emm \AAa F (\alpha, \mu_A)$, as displayed in the middle row of the following diagram.
\[
\begin{tikzar}{}
\lft F A \ar{d}[swap]{\alpha}\ar{r}{\lft F i} \&\lft F \lft F X \ar{d}{\mu} \ar{r}{\lft F\lft F \eta} \& \lft F\lft F\lft F X \ar{d}{\mu} \ar{r}{\lft F\lft F q} \& \lft F \lft F A \ar{d}{\mu}\\
A \ar{r}[swap]{i} \& \lft F X \ar{r}{\lft F \eta} \ar{dr}[swap]{\id} \&\lft F \lft F X \ar{r}[swap]{\lft F q} \ar{d}{\mu} \& \lft F A \ar{d}{\alpha} 
\\
\&\&\lft F X \ar{r}[swap]{q} \& A
\end{tikzar}
\]
The commutativity of the upper three squares implies that $\co\alpha$ is an $\lft F$-algebra homomorphism. The commutativity of the lower square and the triangle implies that $\alpha \circ{\co\alpha} = q\circ i = \id$.

To see that $\co\alpha$ is unique, i.e. that it is the only way to display $\lft FA \tto \alpha A$ as a subalgebra of $\lft F\lft FA \tto\mu \lft FA$, note that the composite homomorphism $\co\alpha \circ \alpha$ is a projector on the free algebra $\lft F \lft FA \tto \mu \lft FA$, and that $\lft FA \eepi \alpha A$ is the splitting of this projector.
\[
\begin{tikzar}{}
\lft F \lft F A \ar[two heads]{r}{\lft F \alpha} \arrow[two heads]{d}[swap]{\mu} 
\& \lft F A \ar[tail]{r}{\lft F{\co\alpha}} \arrow[two heads]{d}[swap]{\alpha}  \&\lft F\lft F A \arrow[two heads]{d}{\mu} 
\\
\lft F A \ar[two heads]{r}[swap]{\alpha}  
\& A \ar[tail]{r}[swap]{{\co\alpha}}   
\&  \lft F A 
\end{tikzar}
\]
While the splitting of a projector is unique up to an isomorphism, fixing one component of the splitting determines the other one on-the-nose: since $\alpha$ is an epi, $\co\alpha_0\circ \alpha = \co\alpha_1\circ \alpha$ implies that $\co\alpha_0=\co\alpha_1$.
\end{proof}

The preceding proposition thus says that every projective algebra $\lft FA\eepi \alpha A$ has a unique embedding $A\mmono {\co\alpha} \lft FA$ into the free algebra $\mu_A$ over its carrier $A$. With no loss of generality, the full subcategory $\Karme \AAa F$ of the (Eilenberg-Moore) category $\Emm A F$ of all $\lft F$-algebras spanned by the projective $\lft F$-algebras can thus be viewed in the form
\bea\notag
\lvert\Karme \AAa F\rvert & = & \left\{\left(\lft FA \tto\alpha A\right) \in \lvert \Emm \AAa F\rvert\ \big|\ \exists{\co\alpha} \in \Emm \AAa F (\alpha, \mu_A).\  \alpha \circ{\co\alpha} = \id_A\right\}\\
& \cong & \coprod_{\alpha\in | \Emm \AAa F|} \Big\{{\co\alpha}\in \Emm \AAa F (\alpha, \mu_A)\ \big|\ \alpha \circ{\co\alpha} = \id_A\Big\}\notag\\ 
\Karme \AAa F\left(\alpha, \gamma\right) & = & \left\{ f \in \AAa(A,C)\ \Big|\ \ \begin{tikzar}[row sep=1.8em,column sep=1.8em]
 \lft F A \ar{d}{\alpha} \ar{r}{\lft F f}
 \& \lft FC \ar{d}[swap]{\gamma} \\
A \ar{r}[swap]{f}  \& C 
\end{tikzar}\  
\right\} \label{eq:EM-proj}
\eea

\begin{lemma}\label{lemma:incl-proj}
Let $\lft FA\eepi \alpha A$ be a projective $\lft F$-algebra and $A\mmono{\co\alpha} \lft FA$ the $\lft F$-algebra monomorphism, as constructed in Prop.\ref{prop:charProjective}, that makes $\alpha$ into a subalgebra of the free algebra $\mu_A$. Then the transpose $\ladj F A\tto{\co\alpha'} \ladj F A$ of $A\tto{\co\alpha} \radj F\ladj F A$
is idempotent\footnote{If the category of free algebras $\Klm \AAa F$ is presented in the Kleisli form, then $A\tto{\co\alpha} \lft FA$   itself is an idempotent morphism, and thus a projector in it.}, and thus a projector in $\Klm \AAa F$. Moreover
\bea\label{eq:ten}
\co\alpha\circ \alpha & = & \radj F{\co\alpha}'
\eea
\end{lemma}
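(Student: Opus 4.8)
The plan is to prove equation \eqref{eq:ten} first, since it feeds directly into the idempotency claim. Recall that the adjunction transpose of a morphism $A \tto{\co\alpha} \radj F \ladj F A$ in $\AAa$ is the $\BBb$-morphism $\co\alpha' = \varepsilon_{\ladj F A}\circ \ladj F \co\alpha$. Applying $\radj F$ and using functoriality together with the standard identity $\mu_A = \radj F \varepsilon_{\ladj F A}$ gives $\radj F \co\alpha' = \radj F \varepsilon_{\ladj F A}\circ \radj F\ladj F \co\alpha = \mu_A \circ \lft F \co\alpha$. On the other hand, Prop.~\ref{prop:charProjective} provides $\co\alpha$ as an $\lft F$-algebra homomorphism $\alpha \to \mu_A$, and the commuting square expressing this homomorphism condition reads precisely $\co\alpha \circ \alpha = \mu_A \circ \lft F \co\alpha$. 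Chaining these two identities yields $\co\alpha\circ\alpha = \radj F\co\alpha'$, which is \eqref{eq:ten}.

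For idempotency I would route through the Kleisli composition, which is where the claim becomes a one-line calculation. In the presentation \eqref{def:kleis}, the morphism $\co\alpha' \in \BBb(\ladj F A, \ladj F A)$ is exactly the Kleisli form of $\co\alpha$, and the transpose bijection carries Kleisli composition to ordinary composition in $\BBb$. So it suffices to show that $\co\alpha$ is idempotent as a Kleisli endomorphism, i.e. $\co\alpha \circ_{\lft F} \co\alpha = \co\alpha$. Unfolding the Kleisli composite and substituting \eqref{eq:ten} followed by the splitting identity $\alpha \circ \co\alpha = \id$ from Prop.~\ref{prop:charProjective} gives $\co\alpha \circ_{\lft F} \co\alpha = \mu_A \circ \lft F \co\alpha \circ \co\alpha = (\co\alpha \circ \alpha)\circ \co\alpha = \co\alpha \circ (\alpha \circ \co\alpha) = \co\alpha$. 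Transporting this equation across the transpose bijection turns it into $\co\alpha'\circ \co\alpha' = \co\alpha'$ in $\BBb$, which is the asserted idempotency and also justifies the footnote that $\co\alpha$ is literally a projector in the Kleisli category.

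The calculation carries no real difficulty: the only point demanding care is the bookkeeping of the adjunction transpose together with the identity $\mu = \radj F \varepsilon \ladj F$, and the explicit invocation that the natural bijection $\AAa(A, \lft F A)\cong \BBb(\ladj F A, \ladj F A)$ intertwines Kleisli and $\BBb$ composition, so that idempotency may be checked on either side. A self-contained alternative would verify $\co\alpha'\circ\co\alpha' = \co\alpha'$ directly in $\BBb$ using naturality of $\varepsilon$ and the triangle identity, but this is strictly longer than the Kleisli route and offers no additional insight, so I would relegate it to at most a parenthetical remark.
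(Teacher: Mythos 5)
Your proposal is correct and follows essentially the same route as the paper: the paper's proof transposes a single diagram whose square is the homomorphism condition $\mu_A\circ\lft F\co\alpha=\co\alpha\circ\alpha$ and whose triangle is the splitting $\alpha\circ\co\alpha=\id$, and derives \eqref{eq:ten} from that same square together with $\mu\circ\lft F\co\alpha=\radj F\co\alpha'$ — exactly the two identities you chain, merely in the opposite order. The only cosmetic difference is that you write out the Kleisli composite in equations where the paper draws the diagram and invokes transposition.
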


\begin{proof} The fact that $\co\alpha'\circ \co\alpha ' =  \co\alpha'$ can be seen by transposing the following diagram along the adjunction $\adj F$.
\[
\begin{tikzar}{}
A \ar{r}{\co\alpha}  \ar{dr}[swap]{\id}
\& \lft F A \ar{r}{\lft F{\co\alpha}} \arrow{d}[swap]{\alpha}  \&\lft F\lft F A \arrow{d}{\mu} 
\\
\& A \ar{r}[swap]{{\co\alpha}}   
\&  \lft F A 
\end{tikzar}
\]
Equation \eqref{eq:ten} also follows from the commutativity of the square on the above diagram, and the observation that
\[\mu \circ \lft F \co\alpha = \radj F \varepsilon \circ \radj F \ladj F \co\alpha = \radj F\left(\varepsilon \circ \ladj F \co\alpha \right) = \radj F \co\alpha '
\]
\end{proof}

\begin{lemma}\label{lemma:projmorphisms}
Let $\lft FA \tto \alpha A$ and $\lft FC \tto \gamma C$ be projective $\lft F$-algebras, with the $\lft F$-algebra monomorphisms  $A\mmono{\co\alpha}\in \lft F A$ and $C\mmono{\co\gamma}\lft F C$ including them into the free $\lft F$-algebras $\mu_A$ and $\mu_C$, respectively, as in Prop.~\ref{prop:charProjective}. 
Then every $\lft F$-algebra homomorphism $f\in \Emm \AAa F(\alpha, \gamma) = \Karme \AAa F(\alpha, \gamma)$ induces the homomorphism $H f = {\co\gamma}'\circ \ladj F f = \ladj F f \circ{\co\alpha}' \in \Karmo\AAa F (\co\alpha', \co\gamma')$, which is \emph{compliant} in the sense of \eqref{eq:compliant}, since
\bear
f\circ \alpha = \gamma \circ \lft F f & \iff & \co\gamma' \circ H f \circ{\co\alpha}' = H f
\eear
\end{lemma}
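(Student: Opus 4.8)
The plan is to verify that $Hf$ is a morphism of $\Karmo\AAa F(\co\alpha',\co\gamma')$ by transposing the compliance equation across the adjunction $\adj F$, and to read the displayed biconditional off that transposition. Throughout I would lean on Lemma~\ref{lemma:incl-proj}, which records that $\co\alpha'$ and $\co\gamma'$ are idempotents in $\BBb$ (so they are genuine objects of $\Karmo\AAa F$) and that $\radj F\co\alpha' = \co\alpha\circ\alpha$, $\radj F\co\gamma' = \co\gamma\circ\gamma$, together with the algebra axioms, the identity $\radj F\varepsilon=\mu$, and $\alpha\circ\co\alpha=\id$ from Prop.~\ref{prop:charProjective}.

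First I would take $Hf=\co\gamma'\circ\ladj F f$ and transpose the compliance equation $\co\gamma'\circ Hf\circ\co\alpha' = Hf$. Since $\co\gamma'$ is idempotent this reads $\co\gamma'\circ\ladj F f\circ\co\alpha' = \co\gamma'\circ\ladj F f$ in $\BBb(\ladj F A,\ladj F C)$. Applying the bijection $\BBb(\ladj F A,\ladj F C)\cong\AAa(A,\lft F C)$, $h\mapsto\radj F h\circ\eta_A$, the left-hand side becomes $\radj F\co\gamma'\circ\lft F f\circ\radj F\co\alpha'\circ\eta_A = \co\gamma\circ\gamma\circ\lft F f\circ\co\alpha$ (using $\co\alpha\circ\alpha\circ\eta_A=\co\alpha$), and the right-hand side becomes $\co\gamma\circ\gamma\circ\lft F f\circ\eta_A = \co\gamma\circ\gamma\circ\eta_C\circ f = \co\gamma\circ f$ (by naturality of $\eta$ and $\gamma\circ\eta_C=\id$). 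As the transpose is a bijection, compliance of $Hf$ is equivalent to $\co\gamma\circ\gamma\circ\lft F f\circ\co\alpha = \co\gamma\circ f$; and since $\co\gamma$ is monic, this is equivalent to the single $\AAa$-equation $\gamma\circ\lft F f\circ\co\alpha = f$.

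It then remains to see this is equivalent to the homomorphism condition $f\circ\alpha=\gamma\circ\lft F f$. The forward implication is immediate: $\gamma\circ\lft F f\circ\co\alpha = f\circ\alpha\circ\co\alpha = f$. For the converse I would write $f\circ\alpha = \gamma\circ\lft F f\circ\co\alpha\circ\alpha = \gamma\circ\lft F f\circ\mu_A\circ\lft F\co\alpha$ (expanding $\co\alpha\circ\alpha=\mu_A\circ\lft F\co\alpha$ from \eqref{eq:ten}), push the multiplication through by naturality of $\mu$ to get $\gamma\circ\mu_C\circ\lft F(\lft F f\circ\co\alpha)$, apply the algebra axiom $\gamma\circ\mu_C=\gamma\circ\lft F\gamma$ to reach $\gamma\circ\lft F(\gamma\circ\lft F f\circ\co\alpha)$, and finally collapse the inner composite by the hypothesis $\gamma\circ\lft F f\circ\co\alpha=f$ to obtain $\gamma\circ\lft F f$. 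This closes the biconditional and shows that $Hf$ is compliant exactly when $f$ is an algebra homomorphism.

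The step I expect to be the main obstacle is reconciling the two presentations $\co\gamma'\circ\ladj F f$ and $\ladj F f\circ\co\alpha'$ of $Hf$. By compliance one already has $\co\gamma'\circ\ladj F f = \co\gamma'\circ\ladj F f\circ\co\alpha'$, so $\co\gamma'\circ\ladj F f$ and the sandwiched form $\co\gamma'\circ\ladj F f\circ\co\alpha'$ name the same compliant morphism; the delicate question is whether the leading $\co\gamma'$ may be deleted, i.e. whether $\ladj F f\circ\co\alpha'$ is itself compliant. Transposing as above, this amounts to the naturality identity $\lft F f\circ\co\alpha = \co\gamma\circ f$ for the canonical sections of Prop.~\ref{prop:charProjective}. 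I would attempt it using the uniqueness clause of that proposition together with the hom-property of $\co\gamma$, noting that both $\lft F f\circ\co\alpha$ and $\co\gamma\circ f$ are algebra homomorphisms $\alpha\to\mu_C$ that agree after postcomposition with $\gamma$. However, since postcomposition with the epi $\gamma$ does not by itself determine a homomorphism into the free algebra $\mu_C$, I anticipate that the homomorphism condition on $f$ is not enough to force this second equality on the nose, so the argument must invoke the idempotent structure more forcefully — and I would scrutinise whether the intended $Hf$ is really the symmetric form $\co\gamma'\circ\ladj F f\circ\co\alpha'$, which is compliant by construction and equals $\co\gamma'\circ\ladj F f$ under the condition just proved.
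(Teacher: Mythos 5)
The paper states this lemma without giving a proof, so there is nothing of its own to compare you against; judged on its merits, your argument for the biconditional is correct and is surely the intended one. Transposing the compliance equation along $\adj F$ using $\radj F\co\alpha'=\co\alpha\circ\alpha$ and $\radj F\co\gamma'=\co\gamma\circ\gamma$ from Lemma~\ref{lemma:incl-proj}, cancelling the split mono $\co\gamma$ to reduce compliance of $\co\gamma'\circ\ladj Ff$ to the single equation $\gamma\circ\lft Ff\circ\co\alpha=f$, and then closing the converse with $\co\alpha\circ\alpha=\mu_A\circ\lft F\co\alpha$, naturality of $\mu$ and the associativity law $\gamma\circ\mu_C=\gamma\circ\lft F\gamma$ is exactly the right computation, and every step checks out.

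Your worry in the final paragraph is not merely a gap you failed to close: the on-the-nose equation $\co\gamma'\circ\ladj Ff=\ladj Ff\circ\co\alpha'$, equivalently $\lft Ff\circ\co\alpha=\co\gamma\circ f$, is genuinely false, i.e.\ the canonical sections of Prop.~\ref{prop:charProjective} are not natural in $f$. For a counterexample take the maybe monad $\lft FX=X+1$ on $\Set$, whose algebras are pointed sets; every algebra is projective, with $\co\alpha$ sending the basepoint $a_0$ to the adjoined point $\ast$ and fixing everything else. For $A=\{a_0,a_1\}$, $C=\{c_0\}$ and $f$ the unique homomorphism, one gets $(\lft Ff\circ\co\alpha)(a_1)=c_0$ but $(\co\gamma\circ f)(a_1)=\ast$. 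In particular $\ladj Ff\circ\co\alpha'$ need not satisfy the compliance condition at all, so under definition \eqref{eq:compliant-form} it is not even a morphism of $\Karmo\AAa F$. The two expressions only agree after normalisation by the Karoubi identities, $\co\gamma'\circ(\ladj Ff\circ\co\alpha')\circ\co\alpha'=\co\gamma'\circ(\co\gamma'\circ\ladj Ff)\circ\co\alpha'$, which is precisely the sandwiched form you propose; and the subsequent proof of Prop.~\ref{prop:eqproj} only ever uses $Hf=\co\gamma'\circ\ladj Ff$ (through $\radj FHf=\radj F\co\gamma'\circ\radj F\ladj Ff$), so your reading is the correct repair of the statement.
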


\begin{proposition}\label{prop:eqproj}
There is an equivalence of categories
\beq\label{eq:K}
\begin{tikzar}{}
\Karme \AAa F \arrow[bend left = 7]
{rrr}{H}[swap]{\simeq} \&\&\& \Karmo \AAa F   \arrow[bend left = 7]
{lll}{K} 
\end{tikzar}
\eeq
where the object part of the functor $H$ is defined using Lemma~\ref{lemma:incl-proj}
\[
\prooftree
\left(\alpha \mmono{\co\alpha} \mu_A \right) \in \big|\Karme \AAa F\big|
\justifies
H\alpha = \left<A, \ladj FA \tto{\co\alpha'} \ladj F A\right> \in \big|\Karmo \AAa F\big|
\endprooftree
\]
whereas the arrow part is defined in Lemma~\ref{lemma:projmorphisms}.
\[\prooftree
{f}  \in \Karme \AAa F (\alpha, \gamma)
\justifies
Hf = \co\gamma'\circ \ladj F f = \ladj F f \circ{\co\alpha}' \in \Karmo\AAa F (H\alpha , H\gamma) 
\endprooftree
\]
The functor $K$, on the other hand, is the factorization of the functor $M^\circlearrowleft: \Karmo \AAa F \to \Emm \AAa F$ from \eqref{eq:Mcircle} through the inclusion $\Karme \AAa F \inclusion \Emm \AAa F$. More precisely, its object and the arrow parts
\[
\prooftree
\left< X, \ladj FX \tto \varphi\ladj FX\right> \in \big|\Karmo \AAa F\big|
\justifies
\left(\lft FA \tto{K\varphi } A\right) \in \big|\Karme \AAa F\big|
\endprooftree
\quad \mbox{and} \quad  
\prooftree
{h}  \in \Karmo \AAa F (\varphi, \psi)
\justifies
Kh\ \in\ \Karme \AAa F\big(K\varphi, K\psi\big)
\endprooftree
\]
are defined by the projector splittings
\beq\label{eq:spliteq}
\begin{tikzar}{}
\hspace{1em}\mu_X \arrow[loop left]{u}[swap]{\radj F \varphi}
\arrow[bend left = 12,twoheadrightarrow]{rr}{q}  
\ar{dd}[swap]{\radj F h} \&\&  K\varphi \arrow[bend left = 12,rightarrowtail]{ll}{i} \ar[dashed]{dd}{Kh}\\ 
\\
\hspace{1em}  \mu_Y 
\arrow[loop left]{u}[swap]{\radj F \psi} 
\arrow[bend left = 12,twoheadrightarrow]{rr}{p}  
\& \& K\psi \arrow[bend left = 12,rightarrowtail]{ll}{j}
\end{tikzar}
\eeq
\end{proposition}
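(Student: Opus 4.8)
The plan is to show that the two explicitly given functors $H$ and $K$ are well defined, functorial, and mutually quasi-inverse, by producing natural isomorphisms $KH\cong\Id_{\Karme\AAa F}$ and $HK\cong\Id_{\Karmo\AAa F}$. Conceptually it is worth keeping in mind that $\Karmo\AAa F$ is the Karoubi envelope of the Kleisli category $\Klm\AAa F$, while $\Karme\AAa F$ is the closure of the free algebras under retracts inside $\Emm\AAa F$; from this angle the proposition says that the canonical extension $M^\circlearrowleft$ of \eqref{eq:Mcircle} corestricts to an equivalence onto its essential image, with $K$ that corestriction and $H$ a quasi-inverse. I would nonetheless run the argument through the two concrete natural isomorphisms, since all the ingredients are already in place.

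First I would check that $K$ does land in $\Karme\AAa F$ and is a functor. Well-definedness of the codomain is exactly Prop.~\ref{prop:charProjective}: the splitting \eqref{eq:spliteq} of $\radj F\varphi$ over $\mu_X$ produces a projective algebra, and that proposition exhibits it as a retract of the free algebra over its own carrier, hence as an object of $\Karme\AAa F$. Functoriality of $K$ is just the functoriality of idempotent splitting: $Kh$ is the arrow induced on the splittings by $\radj Fh$ as in \eqref{eq:spliteq}, and the uniqueness of such induced arrows forces $K\id=\id$ and $K(g\circ h)=Kg\circ Kh$. Functoriality of $H$ is immediate from Lemma~\ref{lemma:projmorphisms}: because $Hf=\co\gamma'\circ\ladj Ff=\ladj Ff\circ\co\alpha'$ and composition in $\Karmo\AAa F$ is ordinary composition in $\BBb$, the two equal presentations of $Hf$ together with the idempotency of $\co\alpha'$ give $H(g\circ f)=Hg\circ Hf$ and $H\id=\co\alpha'=\id_{H\alpha}$.

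The direction $KH\cong\Id$ is short. For a projective algebra $\alpha$ we have $H\alpha=\co\alpha'$, so $KH\alpha$ is a splitting of $\radj F\co\alpha'$; by Lemma~\ref{lemma:incl-proj}, equation~\eqref{eq:ten}, $\radj F\co\alpha'=\co\alpha\circ\alpha$, and Prop.~\ref{prop:charProjective} already displays $\alpha$ (with section $\co\alpha$) as exactly such a splitting. Since splittings are unique up to a unique compatible isomorphism and $K$ chooses them functorially, these isomorphisms assemble into a natural isomorphism $KH\cong\Id_{\Karme\AAa F}$.

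The crux is the converse $HK\cong\Id$. Fix a projector $\varphi$ on $\ladj FX$ with splitting data $q\colon\lft FX\to A$ and $i\colon A\to\lft FX$ (so $i\circ q=\radj F\varphi$ and $q\circ i=\id$) defining $K\varphi=\alpha$; then $HK\varphi=\co\alpha'$ on $\ladj FA$, where $\co\alpha=\lft Fq\circ\lft F\eta\circ i$ by Prop.~\ref{prop:charProjective}. I would compare $\varphi$ and $\co\alpha'$ in $\Karmo\AAa F$ through $u=\varepsilon_{\ladj FX}\circ\ladj Fi\colon\ladj FA\to\ladj FX$ and $v=\ladj F(q\circ\eta_X)\colon\ladj FX\to\ladj FA$. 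The two key identities are $u\circ v=\varphi$, obtained from naturality of $\varepsilon$ applied to $\radj F\varphi=i\circ q$ followed by the triangle identity $\varepsilon_{\ladj FX}\circ\ladj F\eta_X=\id$, and $v\circ u=\co\alpha'$, obtained by pushing $\varepsilon_{\ladj FA}$ past $\ladj F\radj F\ladj Fq$ and then past $\ladj F\radj F\ladj F\eta_X$ (two uses of naturality of $\varepsilon$). Compressing $u$ and $v$ by the idempotents, i.e. replacing them by $\varphi\circ u\circ\co\alpha'$ and $\co\alpha'\circ v\circ\varphi$, produces genuine morphisms of $\Karmo\AAa F$ whose composites are still $\varphi$ and $\co\alpha'$, namely the identities on these objects; this is the required isomorphism $HK\varphi\cong\varphi$. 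The remaining and most delicate task is naturality: verifying that for a compliant $h\colon\varphi\to\psi$ the comparison squares built from $u,v$ for $\varphi$ and for $\psi$ commute with $Hh$ on one side and with the arrow $Kh$ induced in \eqref{eq:spliteq} on the other, which once more reduces to naturality of $\varepsilon$ and $\eta$ against the fill-in defining $Kh$. I expect this bookkeeping with the adjunction transposes, rather than any single hard idea, to be the main obstacle.
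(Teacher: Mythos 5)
Your proposal is correct and follows essentially the same route as the paper: your comparison maps $u=\varepsilon_{\ladj FX}\circ\ladj Fi$ and $\co\alpha'\circ v$ are literally the paper's $i'$ and $i''$, and your identities $u\circ v=\varphi$, $v\circ u=\co\alpha'$ are the paper's equations \eqref{eq:inverses}, obtained by the same naturality-of-$\varepsilon$ and triangle-identity computations. The $KH\cong\Id$ direction via \eqref{eq:ten} and uniqueness of splittings, and the deferred naturality bookkeeping, likewise match the paper, which settles the morphism level by showing $KHf=f$ from the uniqueness of fill-ins between epi-mono factorizations.
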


\begin{proof} Both functors are clearly well defined. Towards the isomorphism $HK\varphi \cong \varphi$, we first split the projector $\mu_X\tto{\radj F\varphi}\mu_X$ in $\Karme \AAa F$ to get the subalgebra $K\varphi = \left(\lft FA \tto{\alpha} A\right)$ of the free algebra $\lft F\lft F X\tto{\mu} \lft FX$, and then construct the homomorphism $A \tto{\co\alpha} \lft FA$, like in Prop.~\ref{prop:charProjective}, to get the projector $HK\varphi = H\alpha = \co\alpha'$. The isomorphism $\varphi \cong HK\varphi$ in $\Karmo\AAa F$ is realized by the transpose $\ladj F A \tto{i'}\ladj FX$ of the monic component $A\tto i \radj F\ladj FX$ of the splitting $\radj F \varphi = \left(\ladj F\radj FX\eepi{q} A \mmono{i}\ladj F\radj F X\right)$ in \eqref{eq:Kvarphi}. The fact that $i'$ is a projector homomorphism from $\co \alpha' = HK \varphi$ to $\varphi$ in $\Karmo\AAa F$ boils down to the equations $\varphi \circ i' = i' = i' \circ \co \alpha ' $. To see that the first one holds, take a look at the adjunction transposes of its two sides:
\[
\radj F \varphi \circ i\ =\ i\circ q\circ i = i 
\]
To see that the second equation holds, consider the following diagram:
\[
\begin{tikzar}{}
A \ar{ddr}[swap]{i}  
\ar{r}[swap]{i} \ar[bend left = 20]{rrr}{\co \alpha} \&
\lft F X \ar{d}[swap]{q} 
\ar[bend left = 20]{dd}{\radj F\varphi} 
\ar{r}[swap]{\eta} 
\& 
\lft F\lft F X \ar[bend left = 20]{dd}[swap]{\lft F \radj F\varphi}
\ar{r}[swap]{\lft F q} \& \lft F A \ar{ldd}{\lft F i}\\
\& A \ar{d}[swap]{i}\\
\& \lft F X \ar{r}{\eta} 
\&\lft F \lft F X \ar[bend left = 20]{l}{\mu}
\end{tikzar}
\]
The two paths around this diagram correspond to the transposes of the two sides of the second equation. Hence $i' \in \Karmo\AAa F(\co\alpha',\varphi)$.  To show that $i'$ is an isomorphism in $\Karmo\AAa F$, consider
\bea\label{eq:r}
i'' & = & \big(\ladj F X \tto{\ladj F \eta} \ladj F \radj F \ladj F X \tto{\ladj F q} \ladj F A \tto{\co\alpha'} \ladj FA\big)
\eea
The equations
\beq\label{eq:inverses} i'' \circ i' = \co \alpha' \qquad\mbox{ and } \qquad i' \circ i'' = \varphi
\eeq
follow directly from the definitions. Since this immediately implies $i''\circ \varphi = i'' = \co\alpha' \circ i''$, it follows that  $i''\in\Karmo\AAa F(\varphi, \co\alpha')$. Equations \eqref{eq:inverses} mean that $i'$ and $i''$ are each other's inverses in $\Karmo\AAa F$. Hence $HK\varphi \cong \varphi$, where $HK\varphi = \co\alpha'$.

The isomorphism $KH\alpha \cong \alpha$ may seem surprising. How can the functor $Hf = \co\gamma '\circ \ladj F f = \ladj F f \circ{\co\alpha}'$ be faithful when $\ladj F$ in general does not have to be faithful? The answer is in the following diagram: 
\[
\begin{tikzar}[row sep=2.5em,column sep=1.8em]
\lft FA  \arrow{rr}[swap]{\lft Ff} \arrow[bend left = 20]{rrrr}{\radj FHf = \radj F \co\gamma ' \circ \radj F\ladj F f}  \ar[bend right = 20]{dd}[swap]{\radj F H\alpha = \radj F\co\alpha'}  \ar[two heads,thick]{d}{\alpha} \&\&  \lft FC \arrow[two heads]{d}{\gamma} \arrow{r}[swap]{\gamma} \& C \arrow{r}[swap]{\co\gamma} \& \lft F C\arrow[two heads]{d}[swap]{\gamma} \arrow[bend left = 20]{dd}{\radj F H\gamma =\radj F\co\gamma'}\\
A \arrow[dashed,thick]{rr}{f} \arrow[tail]{d}{{\co\alpha}} \& \& C \arrow[tail,thick]{d}{\co\gamma} \arrow[dashed,thick]{rr}{\id} \&\& C \arrow[tail,thick]{d}[swap]{\co\gamma}\\
\lft FA \arrow[bend right = 20]{rrrr}[swap]{\radj FHf = \radj F \co\gamma' \circ \radj F\ladj F f}
\arrow{rr}{\lft Ff} \&\& \lft FC \arrow{r}{\gamma} \& C \arrow{r}{\co\gamma} \& \lft FC
\end{tikzar}
\]
On the left and on the right are the projectors $\radj F H\alpha$ and $\radj F H\gamma$. By \eqref{eq:spliteq}, splitting them gives $KH\alpha \cong \alpha$ and $KH\gamma \cong \gamma$. On the top and on the bottom is the projector morphism $\radj F Hf$. Also by \eqref{eq:spliteq}, it induces the morphism $KHf \in \Karme\AAa F(\alpha, \gamma)$ between the splittings of $\radj F H\alpha$ and $\radj F H\gamma$. It is denoted by the dashed arrow through the middle. We show that $KHf = f$. Since the projector splittings are given as the epi-mono factorizations, $KHf$ is the unique morphism from $A$ on the left to $C$ on the right making the rectangle above it and the rectangle below it commute. But the vertical arrow $\lft FC\epi C \mono \lft FC$ is also an epi-mono factorization, and 
\begin{itemize}
\item $f$ is the unique morphism making the upper left rectangle and the lower left rectangle commute (the latter by Lemma~\ref{lemma:projmorphisms});
\item $\id$ is the unique morphism making the upper right rectangle and the lower right rectangle commute (because $\gamma\circ \co\gamma = \id$). 
\end{itemize}
Hence $KH f = f$.
\end{proof}

\section{Equivalences between algebras and coalgebras}\label{Sec:ProjCoalg}

In this Section we prove the main theorem, establishing the equivalence between projective algebras and all coalgebras, and state the dual theorem, establishing the equivalence between injective coalgebras and all algebras. The equivalences, however, require the \emph{consistent}\/ homomorphisms, as in  \eqref{eq:consistent}, and not the compliant homomorphisms, like in \eqref{eq:compliant} and $\Karmo \AAa F$.

\subsection{Consistent homomorphisms}
We define the category of projective $\lft F$-algebras and consistent homomorphisms in two forms, one over the projectors in $\Klm \AAa F$, one as a subcategory of $\Emm \AAa F$ again. The first version is:
\bea\notag
|\Karm \AAa F| & = & \coprod_{X\in |\AAa|} \Big\{\varphi \in \BBb(\ladj F X, \ladj F X) \ \big|\ \varphi\circ \varphi = \varphi \wedge \lft FX \stackrel{\radj F \varphi}\looparrowright X \Big\}
\\
\Karm \AAa F (\varphi, \psi) & = & \left\{f \in \AAa(X,Y)\ \Big|\ \ 
\begin{tikzar}[row sep=1.8em,column sep=1.8em]
\ladj F X \ar{r}{\ladj F f} \ar{d}{\varphi} \& \ladj F Y \ar{d}[swap]{\psi}  \\ 
\ladj F X \ar{r}{\ladj F f}\& \ladj F Y 
\end{tikzar}\ 
\right\}\label{eq:consistent-form}
\eea
where $\lft FX \stackrel{\radj F \varphi}\looparrowright X$ is the notation from Def.~\ref{def:splitting}, meaning that $\radj F \varphi$ splits in the form $\lft FX \epi X \mono \lft FX$. The second version is:
\bea\notag
\lvert\Karmex \AAa F\rvert & = & 
\coprod_{\alpha\in | \Emm \AAa F|} \Big\{{\co\alpha}\in \Emm \AAa F (\alpha, \mu_A)\ \big|\ \alpha \circ{\co\alpha} = \id_A\Big\}\\ 
\Karmex \AAa F\left(\alpha, \gamma\right) & = & \left\{ f \in \Emm \AAa F(\alpha,\gamma )\ \Big|\ \ \begin{tikzar}[row sep=1.8em,column sep=1.8em]
 \lft F A  \ar{r}{\lft F f}
 \& \lft FC  \\
A \ar{u}[swap]{\co\alpha} \ar{r}[swap]{f}  \& C \ar{u}{\co\gamma}
\end{tikzar}\  
\right\} \label{eq:EM-projj}
\eea

\noindent\paragraph{Remarks.} Note that the $\Karmex \AAa F$-morphisms in \eqref{eq:EM-projj} are required to be in $\Emm \AAa F$, and thus satisfy the requirement of \eqref{eq:EM-proj}; but that they are \emph{moreover}\/ required to commute with the splitings $\co\alpha$ and $\co\gamma$. Concerning the $\Karm \AAa F$-homomorphisms, note that the intuitive distinction between compliant  and consistent databases from (\ref{eq:compliant}--\ref{eq:consistent}), has now been promoted to the formal distinction between the compliant homomorphisms defined in \eqref{eq:compliant-form} and the consistent homomorphisms defined in  \eqref{eq:consistent-form}. A compliant $h$ lives in $\BBb$ and satisfies $h = \psi\circ h \circ \varphi$, whereas a consistent $f$ lives in $\AAa$ and satisfies $\psi \circ \ladj F f = \ladj Ff \circ \varphi$. As for the objects, note that  in $\Karmo \AAa F$ we did not require that every projector $\ladj F X \tto \varphi \ladj F X$ satisfies $\lft FX \stackrel{\radj F \varphi}\looparrowright X$, which is required in the definition of  $\Karm \AAa F$ above. The reason is that in $\Karmo \AAa F$, every projector $\ladj F X \tto \varphi \ladj F X$ is isomorphic to a projector $\ladj F A\tto{\co \alpha '} \ladj FA$, induced by a projective $\lft F$-algebra $\lft FA\tto \alpha A$. This latter projector always satisfies the requirement $\lft FA \stackrel{\radj F \co\alpha'}\looparrowright A$, because $\radj F\alpha '$ splits into $\lft FA \eepi\alpha A \mmono{\co\alpha} \lft F A$, as proved in Lemma~\ref{lemma:incl-proj}. This isomorphism $\varphi \cong \co\alpha '$ was spelled out in the proof of Prop.~\ref{prop:eqproj}, leading to the natural isomorphism $\varphi \cong HK\varphi$. However, this isomorphism is generally not consistent: it is present in $\Karmo \AAa F$, but not in $\Karm \AAa F$. This is why  the requirement $\lft FA \stackrel{\radj F \varphi}\looparrowright A$ needs to be explicitly imposed on the objects of $\Karm \AAa F$, if the equivalence $\Karme\AAa F \simeq \Karmo \AAa F$ from Prop.~\ref{prop:eqproj} is to be extended to $\Karmex\AAa F \simeq \Karm \AAa F$.

\begin{proposition}\label{prop:eqproj2}
There is an equivalence of categories
\[
\begin{tikzar}{}
\Karmex \AAa F \arrow[bend left = 7]
{rrr}{H}[swap]{\simeq} \&\&\& \Karm \AAa F   \arrow[bend left = 7]
{lll}{K} 
\end{tikzar}
\]
where the object parts of both functors are as defined in Prop.~\ref{prop:eqproj}, whereas the arrow parts are
\[
\prooftree
f \in \Karmex \AAa F(\alpha,\gamma)
\justifies
Hf = f \in \Karm \AAa F(\co\alpha', \co \gamma')
\endprooftree
\quad \mbox{and} \quad  
\prooftree
{h}  \in \Karm \AAa F (\varphi, \psi)
\justifies
Kh\ \in\ \Karmex \AAa F\big(K\varphi, K\psi\big)
\endprooftree
\]
\end{proposition}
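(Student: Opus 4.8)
The plan is to read both $H$ and $K$ as acting by the \emph{identity on underlying $\AAa$-morphisms}: on objects they are the functors of Proposition~\ref{prop:eqproj} (so $K\varphi$ is a splitting of $\radj F\varphi$ and $H\alpha=\co\alpha'$), while on arrows $Hf=f$ and, symmetrically, $Kh=h$, each $\AAa$-morphism being reinterpreted on the other side. Everything then reduces to three verifications: (i) a $\Karmex\AAa F$-morphism is consistent, so $H$ lands in $\Karm\AAa F$; (ii) a consistent morphism is simultaneously an $\lft F$-algebra homomorphism and commutes with the co-splittings $\co\alpha,\co\gamma$, so $K$ lands in $\Karmex\AAa F$; and (iii) the two composites are identities on objects. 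Since (i) and (ii) exhibit $\Karmex\AAa F(\alpha,\gamma)$ and $\Karm\AAa F(\co\alpha',\co\gamma')$ as the \emph{same} set of $\AAa$-morphisms, functoriality is automatic and the content is packed into these well-definedness statements and the object matching.

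For (i) I would start from the two defining equations of $f\in\Karmex\AAa F(\alpha,\gamma)$, namely $\gamma\circ\lft F f=f\circ\alpha$ and $\lft F f\circ\co\alpha=\co\gamma\circ f$, and derive consistency $\co\gamma'\circ\ladj F f=\ladj F f\circ\co\alpha'$. Because this is an equation in $\BBb$ and $\radj F$ need not be faithful, I compare the two adjunction transposes in $\AAa(A,\lft F C)$ rather than cancelling $\radj F$; using $\radj F\co\alpha'=\co\alpha\circ\alpha$ and $\radj F\co\gamma'=\co\gamma\circ\gamma$ from Lemma~\ref{lemma:incl-proj}, the unit law $\alpha\circ\eta_A=\id$, and the two equations, both transposes collapse to $\lft F f\circ\co\alpha$, giving (i). For (ii) I run the same computation backwards: granting the over-$X$ realization of (iii) (so that $\radj F\varphi=\co\alpha\circ\alpha$ with $\co\alpha$ the co-splitting of $K\varphi$), applying $\radj F$ to consistency and precomposing with $\eta_X$, the naturality of $\eta$ together with the unit laws $\alpha\circ\eta_X=\id$ and $\gamma\circ\eta_Y=\id$ yields the co-splitting square $\lft F h\circ\co\alpha=\co\gamma\circ h$; substituting this back and cancelling the monomorphism $\co\gamma$ of Proposition~\ref{prop:charProjective} produces the homomorphism law $h\circ\alpha=\gamma\circ\lft F h$. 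These computations use only the transpose bijection, Lemma~\ref{lemma:incl-proj} and the monicity of $\co\gamma$.

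For (iii), the composite $KH$ is immediate: $H\alpha=\co\alpha'$ satisfies the $\Karm$-condition because $\radj F\co\alpha'=\co\alpha\circ\alpha$ splits through $A$ as $\lft FA\eepi\alpha A\mmono{\co\alpha}\lft FA$, and splitting this idempotent recovers $\alpha$, so $KH\alpha\cong\alpha$ canonically. The composite $HK$ is where the defining condition of $\Karm\AAa F$ is consumed: I must show that for $\varphi$ with $\lft FX\stackrel{\radj F\varphi}\looparrowright X$ the algebra $K\varphi$ can be taken over the carrier $X$, with a co-splitting whose transpose is $\varphi$ itself, i.e. $\varphi=\co\alpha'$, so that $HK\varphi=\varphi$ on the nose. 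This is exactly the point at which $\Karm\AAa F$ differs from $\Karmo\AAa F$, and it is what licenses the representation used in (ii).

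The hard part is this last matching. In Proposition~\ref{prop:eqproj} a general projector $\varphi$ is only \emph{isomorphic} to $HK\varphi=\co\alpha'$, the witnessing map being the transpose $i'$ of the splitting monic; and, as noted in the Remark preceding this proposition, $i'$ is compliant but not consistent, so it is unavailable in $\Karm\AAa F$. The role of the hypothesis $\lft FX\stackrel{\radj F\varphi}\looparrowright X$ is precisely to replace this isomorphism by an equality: it must be shown to force $\varphi$ to be the projector $\co\alpha'$ of a genuine projective algebra $K\varphi$ over the very generator $X$, ruling out the projectors that are only conjugate to such a $\co\alpha'$. Establishing that the splitting condition is strong enough to pin $\varphi$ down strictly — rather than only up to the inconsistent isomorphism $i'$ — is the crux on which the whole equivalence rests, and the step I would spend the most care on.
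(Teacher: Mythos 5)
Your proposal correctly reproduces everything in the paper's proof \emph{except} the one step that carries the content, and there it stops at a declaration of intent. The verifications (i) and (ii) are sound and match the paper, which compresses them into the single remark that consistency in $\Karm \AAa F$ and the splitting-preservation condition in $\Karmex \AAa F$ are each other's adjunction transposes (your computation that both transposes collapse to $\lft F f\circ\co\alpha$, using $\radj F\co\alpha'=\co\alpha\circ\alpha$ from Lemma~\ref{lemma:incl-proj} and the unit laws, is exactly this). Likewise $KH\alpha\cong\alpha$ is handled as in Prop.~\ref{prop:eqproj}. But for $HK\varphi=\varphi$ you write ``it must be shown to force $\varphi$ to be the projector $\co\alpha'$\ldots the step I would spend the most care on'' --- and then no argument follows. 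Since you also condition (ii) on ``granting the over-$X$ realization of (iii),'' the gap propagates: without it, $K$ is not even shown to be well defined on arrows. Identifying the crux is not the same as closing it, and as a proof the proposal is incomplete precisely at the point you yourself flag as the one on which the whole equivalence rests.

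For comparison, the paper closes this step as follows. Split the projector $\radj F\varphi$ on $\mu_X$ in $\Emm\AAa F$ as $\mu_X\eepi{\alpha}\alpha\mmono{\co\alpha}\mu_X$; the hypothesis $\lft FX\stackrel{\radj F\varphi}\looparrowright X$ guarantees that $\radj F\varphi$ also splits through $X$ in $\AAa$, so by uniqueness of splittings the carrier of $\alpha$ may be taken to be $X$ itself. Now both the structure map $\alpha$ and the epi component $q$ of the splitting are $\lft F$-algebra homomorphisms $\mu_X\to\alpha$ out of the free algebra on $X$, so freeness forces $q=\alpha$; since one component of a splitting determines the other on the nose (the epi is cancellable), $i=\co\alpha$. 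Hence $\radj F\varphi=\co\alpha\circ\alpha$, and transposing along $\eta$ gives $\varphi'=\radj F\varphi\circ\eta=\co\alpha\circ\alpha\circ\eta=\co\alpha$, i.e.\ $\varphi=\co\alpha'$ exactly, not merely up to the non-consistent isomorphism $i'$ of Prop.~\ref{prop:eqproj}. (If you do write this out, be careful with the freeness step: a homomorphism out of $\mu_X$ is determined by its restriction along $\eta_X$, so the identification $q=\alpha$ requires normalizing the splitting so that $q\circ\eta_X=\id$, which the up-to-isomorphism choice of splitting permits.) This is the argument your proposal needs and does not supply.
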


\begin{proof}
We begin like in the proof of Prop.~\ref{prop:eqproj}: towards the isomorphism $HK\varphi \cong \varphi$, we first split the projector $\mu_X\tto{\radj F\varphi}\mu_X$ in $\Karmex \AAa F$ to $\mu_X\eepi \alpha \alpha \mmono{\co\alpha} \mu_X$. This time, however, the assumption $\lft FA \stackrel{\radj F \co\alpha'}\looparrowright A$ means that $\radj F \varphi$ has a splitting in the form $\lft FX \eepi q X \mmono i \lft FX$. The carrier of the algebra $\alpha$ must be isomorphic to $X$, and can be chosen to be $X$ itself. Since $\mu_X$ is a free algebra, it has a unique $\lft F$-algebra homomorphism to $\alpha$. Since both $\alpha$ and $q$ are homomorphisms $\mu_X\to \alpha$, it follows that $q=\alpha$. Since each component of a splitting determines the other one, it follows that $i = \co\alpha$. Hence $\radj F\varphi = \co\alpha \circ \alpha$. It follows that $\varphi = \co \alpha '$, because
\[\varphi ' = \radj F\varphi \circ \eta = \co \alpha \circ \alpha \circ \eta = \co \alpha\]
Thus $HK\varphi = \co \alpha' = \varphi$. The natural isomorphism $KH\alpha \cong \alpha$ is constructed like in Prop.~\ref{prop:eqproj}. The only additional observation is that the condition defining the consistent morphisms in $\Karm \AAa F$ and the condition defining the inclusion preserving $\lft F$-algebra homomorphisms in $\Karmex \AAa F$ are each other's adjunction tranpose.
\end{proof}

\subsection{Projective algebras as coalgebras}\label{Sec:main}
\begin{theorem}\label{thm:main}
For every adjunction $\adj F :\BBb \to \AAa$, with the induced monad $\lft F = \radj F \ladj F$ and comonad $\rgt F = \ladj F \radj F$, the category of $\rgt F$-coalgebras is equivalent with the category of projective $\lft F$-algebras and consistent homomorphisms, provided that $\BBb$ is Cauchy complete. The equivalence is given by the functors
\[
\begin{tikzar}{}
\Emc \BBb F \arrow[bend left = 7]
{rrr}{R}[swap]{\simeq} \&\&\& \Karm \AAa F   \arrow[bend left = 7]
{lll}{L} 
\end{tikzar}
\]
where the rules
\[
\prooftree
\left(B\tto{\beta} \rgt F B\right) \in \big|\Emc \BBb F\big|
\justifies
R\beta = \left<\radj FB,\ \rgt F B \tto \varepsilon B \tto \beta \rgt F B\right> \in \big|\Karm \AAa F\big|
\endprooftree
\]
and  
\[
\prooftree
\left(B\tto{g} D\right) \in \Emc \BBb F (\beta, \delta)
\justifies
Rg =  \left(\rgt F B \tto{\rgt F g} \lft FD\right) \ \in\ \Karm \AAa F\big(R\beta, R\delta\big)
\endprooftree
\]
define $R$, whereas the object part of $L$
\[
\prooftree
\left<A,\ \ladj F A\tto \varphi \ladj F A\right> \in \big| \Karm \AAa F \big|
\justifies
L\varphi = \left(B\tto{i} \ladj F A \tto{\ladj F q'} \ladj F \radj F B\right) \in \big|\Emc \BBb F\big|
\endprooftree 
\]
and its arrow part
\[
\prooftree
\left(A\tto{f} C\right) \in \Karm \AAa F (\varphi, \psi)
\justifies
\left(B\tto{Lf} D\right)\ \in\ \Emc \BBb F\big(L\varphi, L\psi\big)
\endprooftree
\]
are defined using the projector splittings in the following diagram 
\beq\label{eq:arrowL}
\begin{tikzar}[row sep=3em,column sep=2.6em]
\ladj F \radj FB \arrow{d}{\ladj F\radj F Lf} \arrow{r}[description]{\sim}{\ladj F q''}\& \ladj FA  \arrow[two heads]{r}[swap]{q} \arrow[bend left = 20]{rr}{\varphi}  \ar{d}{\ladj Ff} \& B \arrow[bend left = 25,crossing over]{rr}{L\varphi}  \arrow[dashed]{d}[swap]{Lf} \arrow[tail]{r}[swap]{i} \& \ladj FA \arrow{d}[swap]{\ladj Ff} \arrow{r}[description]{\sim}{\ladj Fq'} \& \ladj F \radj FB \arrow{d}[swap]{\ladj F\radj F Lf } \\
\ladj F \radj FD  \arrow{r}[description]{\sim}[swap]{\ladj F p"}\&\ladj FC \arrow[two heads]{r}{p} \arrow[bend right = 20]{rr}[swap]{\psi} \& D \arrow[bend right = 25,crossing over]{rr}[swap]{L\psi} \arrow[tail]{r}{j} \& \ladj F C \arrow{r}[description]{\sim}[swap]{\ladj Fp'} \& \ladj F \radj FD \end{tikzar}
\eeq
where $q'$ is the transpose of $q$ in the projector splitting $\varphi = \left(\ladj FA \eepi q B \mmono i \ladj FA\right)$, and $q''$ is the inverse of $q'$, as explained above \eqref{eq:qprime}; and where $p'$ and $p''$ are derived from $\psi$ in a similar way.
\end{theorem}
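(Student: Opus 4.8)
The plan is to show that the stated rules $R$ and $L$ are well-defined functors and to exhibit natural isomorphisms $LR\cong\id_{\Emc\BBb F}$ and $RL\cong\id_{\Karm\AAa F}$; alternatively one may compose the equivalence $\Karmex\AAa F\simeq\Karm\AAa F$ of Prop.~\ref{prop:eqproj2} with an equivalence $\Emc\BBb F\simeq\Karmex\AAa F$, but the bookkeeping is the same. I shall use throughout the two triangle identities, the transpose relations, and the fact that $\BBb$, being Cauchy complete, splits the idempotents produced below.

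I would first treat $R$, which is the easier half. Given a coalgebra $B\tto\beta\rgt F B$, the endomorphism $\varphi=\beta\circ\varepsilon_B$ of $\rgt F B=\ladj F\radj F B$ is idempotent, because the counit law $\varepsilon_B\circ\beta=\id$ gives $\varphi\circ\varphi=\beta\circ(\varepsilon_B\circ\beta)\circ\varepsilon_B=\varphi$; and $\radj F\varphi=\radj F\beta\circ\radj F\varepsilon_B$ splits through $\radj F B$, since $\radj F\varepsilon_B\circ\radj F\beta=\radj F(\varepsilon_B\circ\beta)=\id$, so the object condition $\lft F\radj F B\stackrel{\radj F\varphi}{\looparrowright}\radj F B$ of \eqref{eq:consistent-form} holds and $R\beta\in|\Karm\AAa F|$. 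For a coalgebra homomorphism $g$ the underlying arrow $\radj F g$ is consistent: writing $\psi=\delta\circ\varepsilon_D$, naturality of $\varepsilon$ together with the homomorphism law give $\psi\circ\rgt F g=\delta\circ g\circ\varepsilon_B=\rgt F g\circ\beta\circ\varepsilon_B=\rgt F g\circ\varphi$, which is exactly the square in \eqref{eq:consistent-form}. Functoriality of $R$ is then inherited from $\radj F$.

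For $L$ I would split $\varphi$ in $\BBb$ as $\ladj F A\eepi q B\mmono i\ladj F A$, write $q'$ for the transpose of $q$, and set $\beta=\ladj F q'\circ i$. The counit law is immediate, $\varepsilon_B\circ\beta=(\varepsilon_B\circ\ladj F q')\circ i=q\circ i=\id$, using $q=\varepsilon_B\circ\ladj F q'$. The substantial point --- and the step I expect to be the main obstacle --- is coassociativity, i.e.\ that $L\varphi$ really lands in $\Emc\BBb F$. This is exactly where the object condition $\lft F A\stackrel{\radj F\varphi}{\looparrowright}A$ is indispensable. Since $\radj F\varphi$ is an idempotent endomorphism of the free algebra $\mu_A$ that also splits through $\radj F B$ via $\radj F q,\radj F i$, uniqueness of splittings forces $A\cong\radj F B$, and one checks --- arguing as in the proof of Prop.~\ref{prop:eqproj2}, where freeness of $\mu_A$ pins the epi component of a splitting to the algebra structure --- that the transpose $q'$ is invertible, with inverse the $q''$ of \eqref{eq:arrowL}. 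With $q'$ invertible, $\ladj F A\cong\rgt F B$ is a cofree coalgebra and $\beta$ is a section of its counit $\varepsilon_B$, whereupon coassociativity reduces to a direct diagram chase with the comonad laws and the triangle identities, giving $L\varphi\in|\Emc\BBb F|$. The arrow part is read directly off the splitting-functoriality square \eqref{eq:arrowL}: a consistent $f$ makes $\ladj F f$ intertwine $\varphi$ and $\psi$, hence descends to a unique $Lf$ between the splittings, and the outer rectangles of \eqref{eq:arrowL} show $Lf$ is a coalgebra homomorphism; functoriality of $L$ is functoriality of splitting.

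It remains to produce the two natural isomorphisms. The direction $LR\cong\id_{\Emc\BBb F}$ is forced: for a coalgebra $(B,\beta)$ the idempotent $\varphi=\beta\circ\varepsilon_B$ already splits as $\rgt F B\eepi{\varepsilon_B}B\mmono{\beta}\rgt F B$, so its transpose $q'$ equals $\id$ by the triangle identity and $LR\beta=\beta$ on the nose, naturally in $\beta$. For $RL\cong\id_{\Karm\AAa F}$ one splits $\varphi$, forms $L\varphi=(B,\beta)$, and computes $RL\varphi=\langle\radj F B,\ \beta\circ\varepsilon_B\rangle$; the invertible transpose $q'\colon A\to\radj F B$ is the required carrier isomorphism, and uniqueness of splittings shows that under it $\varphi$ corresponds to $\beta\circ\varepsilon_B$ as a \emph{consistent} morphism --- this last point being precisely what the object condition buys, as explained in the Remark before the statement, and what fails for the merely compliant isomorphism $\varphi\cong HK\varphi$ of Prop.~\ref{prop:eqproj}. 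Naturality of both isomorphisms is routine. The whole argument thus hinges on the coassociativity of $L\varphi$ and on the invertibility of $q'$, both governed by the object condition, which is why that condition is built into the objects of $\Karm\AAa F$.
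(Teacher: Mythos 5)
Your overall architecture coincides with the paper's: establish that $R$ and $L$ are well defined, observe that $LR$ is the identity on the nose, and exhibit $RL\cong\Id$ via the invertible transpose $q'$. Your treatment of $R$ (including the check of the splitting condition for $R\beta$, which the paper leaves implicit), of the invertibility of $q'$ via the uniqueness-of-splittings argument from Prop.~\ref{prop:eqproj2}, of the arrow part of $L$, and of the two natural isomorphisms is essentially the paper's argument.

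The genuine gap is the coassociativity of $L\varphi$, equation \eqref{eq:Lvarphi-nu}, which is the technical heart of the theorem and which you dispose of by saying that once $q'$ is invertible, $\beta=\ladj F q'\circ i$ is a section of the counit of the cofree coalgebra $\rgt F B$ and coassociativity ``reduces to a direct diagram chase with the comonad laws and the triangle identities.'' A section of the counit of a cofree coalgebra is \emph{not} in general a coalgebra: $\varepsilon_B\circ\beta=\id$ is strictly weaker than the pair of coalgebra laws. The paper's own Section~\ref{Sec:Application} makes this vivid: for the state comonad the counit law yields only $\beta_1(b,\beta_0(b))=b$, whereas coassociativity additionally forces $\beta_0(\beta_1(b,s))=s$ and $\beta_1(\beta_1(b,s),t)=\beta_1(b,t)$. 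Concretely, after cancelling the isomorphism $\ladj F\radj F\ladj F q'$, the law $\ladj F\radj F L\varphi\circ L\varphi=\ladj F\eta\circ L\varphi$ is equivalent to $\ladj F\left(\radj F\varphi\circ\eta_A\right)\circ i=\ladj F\eta_A\circ i$, i.e.\ to $\ladj F\co\alpha\circ i=\ladj F\eta_A\circ i$ once one reduces to $\varphi=\co\alpha'$ as in Prop.~\ref{prop:eqproj2}. This identity does not follow from the comonad and triangle laws alone; it uses that $\co\alpha$ is an $\lft F$-\emph{algebra homomorphism}, i.e.\ its compatibility with $\mu$ coming from Prop.~\ref{prop:charProjective} and Lemma~\ref{lemma:incl-proj}, which is exactly what the paper's split-equalizer argument \eqref{eq:spliteq-three} in $\Karmo\AAa F$ packages before the projectors are split in $\BBb$. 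You have correctly located where the object condition enters (invertibility of $q'$), but the coassociativity step needs its own argument of this kind; the rest of your outline is in order.
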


\begin{proof}
The functor $R$ is well defined, i.e. it lands in $\Karm \AAa F$, because $\varepsilon\circ \beta = \id_B$, which implies that $R\beta$ is a projector:
\[R\beta \circ R\beta = \beta \circ \varepsilon\circ \beta \circ \varepsilon = \beta \circ \varepsilon = R\beta\] 
The fact that $Rg = \left(\rgt FB\tto{\rgt Fg} \rgt FD\right)$ is a consistent $\Karm \AAa F$-morphism follows from the naturality of $\varepsilon$ and the fact that $B\tto g D$ is an $\rgt F$-coalgebra homomorphism.

To show that the functor $L$ is well defined, we need to prove that $L\varphi = \ladj F q'\circ i$ is an $\rgt F$-coalgebra, and that $Lf$, as defined in \eqref{eq:arrowL}, is an $\rgt F$-coalgebra homomorphism. The former requirement means that $L\varphi$ must satisfy the coalgebra equations:
\bea\label{eq:Lvarphi-epsilon}
\varepsilon \circ L\varphi & = &  \id_B\\
\ladj F \radj F L\varphi \circ L\varphi & = & \ladj F \eta \circ L\varphi \label{eq:Lvarphi-nu}
\eea
To spell this out, consider diagram \eqref{eq:arrowL}. The object $B$ is defined by the splitting  $\ladj FA \eepi q B \mmono i \ladj FA$ of the projector $\varphi$; the object $D$ is defined by the splitting $\ladj FC \eepi p D \mmono j \ladj FC$ of the projector $\psi$. On the other hand, using the equivalence of categories $\Karmex \AAa F \tto{\ H\ }\Karm \AAa F$ from Prop.~\ref{prop:eqproj2}, we can assume without loss of generality that $\varphi  = H\alpha = \co\alpha'$ and $\psi = H\gamma = \co\gamma'$ for some projective algebras $\alpha, \gamma \in |\Karmex \AAa F|$, where the inclusions $\alpha \mmono{\co\alpha} \mu_A$ and $\gamma \mmono{\co\gamma}\mu_C$ transpose to $\ladj F A \tto{\co\alpha'}\ladj F A$ and $\ladj F C \tto{\co\gamma'}\ladj F C$. Lemma~\ref{lemma:incl-proj} says that the projector $\radj F \varphi = \radj F \co\alpha'$ splits into $\radj F \ladj F A \eepi{\co\alpha}A \mmono \alpha\radj F \ladj F A$. Since every functor preserves projector splittings, the $\radj F$-image of the splitting $\ladj FA \eepi q B \mmono i \ladj FA$ of $\varphi$ is also a splitting of $\radj F \varphi$. The two splittings of $\radj F \varphi$ induce an isomorphism $A\cong \radj F B$.  By chasing the following diagram, we see that this isomorphism is the transpose $A\tto{q'} \radj F B$ of $\ladj FA \tto q B$.
\beq\label{eq:qprime}
\begin{tikzar}{}
\& \radj F B \arrow[tail,bend left = 20]{dr}{\radj Fi} \\
\radj F \ladj F A \arrow[two heads,bend left = 20]{ru}{\radj F q} \arrow[two heads,bend right = 20]{rd}[swap]{\alpha} \& \& \radj F\ladj FA \\
\&A \arrow[dashed]{uu}{q'} \arrow[tail,bend right = 20]{ur}[swap]{\co\alpha} \arrow[tail,bend right = 10]{ul}[swap]{\eta}\end{tikzar}
\eeq
A chase of a similar diagram, extended by $\radj F \ladj F A \tto{\co\alpha} A$ on the right, shows that the inverse of $A\tto{q'} \radj FB$ is the composite $q'' = \left(\radj FB \tto{\radj F i} \radj F \ladj F A \tto\alpha A\right)$.
But already the fact that $q'$ is an isomorphism with an inverse $q''$ yields the transposition  
\[
\prooftree 
\left(\radj F B \tto{q''} A \tto{q'} \radj F B\right) = \id_{\radj FB}
\justifies
\left(\ladj F\radj F B \tto{\ladj F q''} \ladj FA \tto q B\right) = \varepsilon_B
\endprooftree
\]
As an extension of the projector splitting $\varphi = i\circ q$ along the isomorphism $\ladj Fq'$ on the right and along its inverse $\ladj Fq''$ on the left, the composite $L\varphi \circ \varepsilon$ is clearly a projector splitting. Hence \eqref{eq:Lvarphi-epsilon}.

Towards \eqref{eq:Lvarphi-nu}, consider the following split equalizer\footnote{The commutativity convention is explained in Appendix B.} in $\Karmo \AAa F$
\beq\label{eq:spliteq-three}
\begin{tikzar}{}
\varphi
\arrow[tail]{rr}{\ladj F \eta \circ \varphi}  
\&\&  
\ladj F\radj F\varphi
\arrow[shift left=.75ex]{rr}{\ladj F\radj F(\ladj F \eta \circ\,  \varphi)}
\arrow[tail,shift right=.75ex]{rr}[swap]{\ladj F \eta\, \circ\,  \ladj F \radj F \varphi}
\arrow[bend left = 25,two heads]{ll}{\varphi \circ\varepsilon}
\&\&  
\ladj F\radj F \ladj F\radj F\varphi 
\arrow[shift left = 1ex,bend left = 35,two heads]{ll}{\ladj F \radj F \varphi\,  \circ\, \varepsilon} 
\end{tikzar}
\eeq
Splitting the projectors in $\BBb$ yields the following split equalizer
\beq
\begin{tikzar}{}
B
\arrow[tail]{rr}{L\varphi}  
\&\&  
\ladj F\radj F B
\arrow[shift left=.75ex]{rr}{\ladj F\radj FL\varphi }
\arrow[tail,shift right=.75ex]{rr}[swap]{\ladj F \eta}
\arrow[bend left = 25,two heads]{ll}{\varepsilon} 
\&\&  
\ladj F\radj F \ladj F\radj FB
\arrow[shift left = 1ex,bend left = 35,two heads]{ll}{\varepsilon} 
\end{tikzar}
\eeq
which gives \eqref{eq:Lvarphi-nu}.  

The same reasoning applied to $\psi$ and its splitting in \eqref{eq:arrowL} shows that $L\psi$, as defined there, is also an $\rgt F$-coalgebra. Combining \eqref{eq:qprime} with the analogous diagram for $\radj F\psi$, splitting into $\gamma$ and $\co\gamma$, furthermore gives 
\bear
\radj F Lf = p' \circ f \circ q''
\eear
The definition of $Lf$ in \eqref{eq:arrowL} then displays the equation $\rgt F f\circ L\varphi = L\psi \circ f$, which means that $Lf$ is an $\rgt F$-coalgebra homomorphism from $L\varphi$ to $L\psi$. This completes the proof that the functors $R$ and $L$ are well defined.

To see that the counit $e : LR\to \Id$ is a natural isomorphism, set in \eqref{eq:arrowL} $\varphi = R\beta$ and $A=\radj FB$, which makes $q' = q'' = \id$. Since the definition of $R$ gives the splitting $R\beta = \beta \circ \varepsilon$, and the definition of $L$ says that $LR\beta$ is  the monic part of that splitting (followed by $\ladj F q'$, which is now identity), we have $LR \beta = \beta$. The counit $e : LR\to \Id$ is thus the identity.

The fact that $h:\Id \to RL$ is a natural isomorphism can be seen on \eqref{eq:arrowL}, which displays not only $\varphi$ and $L\varphi = \ladj Fq' \circ i$, but also $RL\varphi = L\varphi \circ \varepsilon = ( \ladj Fq' \circ i) \circ (q \circ \ladj F q'')  = \ladj Fq' \circ \varphi \circ \ladj F q''$. The isomorphism $h_\varphi \in \Karm \AAa F(\varphi, RL\varphi)$ is thus $h_\varphi  = \left( A\tto{q'}\radj F B\right)$ in $\AAa$. The fact that it is consistent, i.e. an $\Karm \AAa F$-morphism, is clear from the following commutative diagram.
\beq
\begin{tikzar}[row sep=1.5em,column sep=1.5em]
\ladj FA \arrow{rrr}{h_\varphi = \ladj F q'}[description]{\sim} \arrow[bend right=25]{ddd}[swap]{\varphi} \&\&\&\ladj F\radj F B \arrow{d}[swap]{\ladj F q''} \arrow[bend left = 25]{ddd}{RL\varphi}\\ 
\&\&\& \ladj FA \arrow{d}[swap]{\varphi}  \\
\&\&\& \ladj FA \arrow{d}[swap]{\ladj Fq'} \\
\ladj FA \arrow{rrr}[swap]{\ladj Fq'}[description]{\sim} \&\&\& \ladj F\radj F B  
\end{tikzar}
\eeq
This completes the proof that $L\dashv R: \Emc \BBb F \to \Karm \AAa F$ is an equivalence of categories.
\end{proof}

\subsection{Injective coalgebras as algebras}
As it is usually the case with algebras and coalgebras, the dual constructions are symmetric, but their interpretations and concrete applications are quite different. For the moment, we just spell out the dual structures and propositions, and leave the dual proofs as an exercise.

While every algebra is a quotient of a free algebra, and projective algebras are also subalgebras of free algebras, every coalgebra is a subalgebra of a cofree coalgebra, and \emph{injective}\/ coalgebras are also quotients of cofree coalgebras. The category of injective $\rgt F$-coalgebras and compliant homomorphisms is thus
\bear
|\Karco \BBb F| & = & \coprod_{X\in |\BBb |} \Big\{\varphi \in \AAa(\radj F X, \radj F X)\ \Big| \ \varphi\circ \varphi = \varphi\wedge \ladj F \radj FX \stackrel{\radj F \varphi}\looparrowright X\Big\}\\
\Karco \BBb F (\varphi, \psi) & = & \left\{h \in \AAa(\radj FX, \radj F Y)\ \Big|\ \ 
\begin{tikzar}[row sep=1.8em,column sep=1.8em]
\radj F X \ar{r}{h} \ar{d}{\varphi} \& \radj F Y  \\ 
\radj F X \ar{r}{h}\& \radj F Y \ar{u}{\psi}
\end{tikzar}\ 
\right\}
\eear
On the other hand, the category of injective coalgebras and consistent homomorphisms is
\bear
|\Karc \BBb F| & = & \coprod_{X\in |\BBb|} \Big\{\varphi \in \AAa(\radj F X, \radj F X) \ \big|\ \varphi\circ \varphi = \varphi \Big\}
\\
\Karc \BBb F (\varphi, \psi) & = & \left\{f \in \BBb(X, Y)\ \Big|\ \ 
\begin{tikzar}[row sep=1.8em,column sep=1.8em]
\radj F X \ar{r}{\radj F f} \ar{d}{\varphi} \& \radj F Y \ar{d}[swap]{\psi}  \\ 
\radj F X \ar{r}{\radj F f}\& \radj F Y 
\end{tikzar}\ 
\right\}
\eear

\begin{theorem}\label{thm:comain}
For every adjunction $\adj F :\BBb \to \AAa$, with the induced monad $\lft F = \radj F \ladj F$ and comonad $\rgt F = \ladj F \radj F$, the category of $\lft F$-algebras is equivalent with the category of injective $\rgt F$-algebras and consistent homomorphisms. The equivalence is given by the functors
\[
\begin{tikzar}{}
\Emm \AAa F \arrow[bend left = 7]
{rrr}{R}[swap]{\simeq} \&\&\& \Karc \BBb F   \arrow[bend left = 7]
{lll}{L} 
\end{tikzar}
\]
where the rules
\[
\prooftree
\left(\lft A\tto{\alpha} A\right) \in \big|\Emm \AAa F\big|
\justifies
R\alpha = \left<\ladj FA,\ \lft F A \tto \alpha A \tto \eta \lft F A\right> \in \big|\Karc \BBb F\big|
\endprooftree
\]
and  
\[
\prooftree
\left(A\tto{f} C\right) \in \Emm \AAa F (\alpha, \gamma)
\justifies
Rf =  \left(\lft F A \tto{\lft F f} \lft FC\right) \ \in\ \Karc \BBb F\big(R\alpha, R\gamma\big)
\endprooftree
\]
define $R$, whereas the object part of $L$
\[
\prooftree
\left<B,\ \radj F B\tto \varphi \radj F B\right> \in \big| \Karc \BBb F \big|
\justifies
L\varphi = \left(\radj F \ladj F A\tto{\radj F i'} \radj F B \tto{q} A\right) \in \big|\Emm \AAa F\big|
\endprooftree 
\]
and its arrow part
\[
\prooftree
\left(B\tto{g} D\right) \in \Karc \BBb F (\varphi, \psi)
\justifies
\left(A\tto{Lg} C\right)\ \in\ \Emm \AAa F\big(L\varphi, L\psi\big)
\endprooftree
\]
are defined using the projector splittings in the following diagram 
\beq\label{eq:arrowLL}
\begin{tikzar}[row sep=3em,column sep=2.6em]
\radj F \ladj FA \arrow[bend left = 35]{rr}{L\varphi} \arrow{d}{\radj F\ladj F Lg} \arrow{r}[description]{\sim}{\radj F i'}\& \radj FB  \arrow[two heads]{r}[swap]{q} \arrow[bend left = 20,crossing over]{rr}{\varphi}  \ar{d}{\radj Fg} \& A   \arrow[dashed]{d}[swap]{Lg} \arrow[tail]{r}[swap]{i} \& \radj FB \arrow{d}[swap]{\radj Fg} \arrow{r}[description]{\sim}{\radj Fi''} \& \radj F \ladj FA \arrow{d}[swap]{\radj F\ladj F Lg } \\
\radj F \ladj FC \arrow[bend right = 25]{rr}[swap]{L\psi}   \arrow{r}[description]{\sim}[swap]{\radj F j'}\&\radj FD \arrow[two heads]{r}{p} \arrow[bend right = 20,crossing over]{rr}[swap]{\psi} \& C \arrow[tail]{r}{j} \& \radj F D \arrow{r}[description]{\sim}[swap]{\radj Fj''} \& \radj F \ladj FC \end{tikzar}
\eeq
\end{theorem}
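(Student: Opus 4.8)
The plan is to obtain Theorem~\ref{thm:comain} not by repeating the fixed-point analysis, but as the image of Theorem~\ref{thm:main} under passage to opposite categories. All the ingredients are self-dual: reversing the arrows in $\adj F:\BBb\to\AAa$ turns it into the adjunction $\radj F^{op}\dashv\ladj F^{op}$ between $\BBb^{op}$ and $\AAa^{op}$, whose right adjoint $\ladj F^{op}$ has domain $\AAa^{op}$. Its induced monad on $\BBb^{op}$ is $\rgt F^{op}=(\ladj F\radj F)^{op}$ and its induced comonad on $\AAa^{op}$ is $\lft F^{op}=(\radj F\ladj F)^{op}$. Under this dictionary an $\lft F$-algebra $\lft F A\tto\alpha A$ in $\AAa$ is exactly an $\lft F^{op}$-coalgebra in $\AAa^{op}$, so that the Eilenberg--Moore category satisfies $\Emm\AAa F=\left(\EEE^{op}\right)$ for the coalgebra category $\EEE$ of the comonad $\lft F^{op}$; dually, a projector $\varphi$ on $\radj F X$ together with consistent morphisms living in $\BBb$ is precisely a projective $\rgt F^{op}$-algebra with consistent morphisms for the dual adjunction, so $\Karc\BBb F$ is the opposite of the corresponding projective-algebra category built from $\rgt F^{op}$.

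First I would record that Cauchy completeness is self-dual: an endomorphism is idempotent in $\CCc$ iff it is idempotent in $\CCc^{op}$, and a splitting $\CCc$ is a splitting in $\CCc^{op}$ with the epi and the mono exchanging roles, so by Proposition~\ref{prop:splitting} the standing assumption that $\AAa$ and $\BBb$ split idempotents passes to $\AAa^{op}$ and $\BBb^{op}$. The comonad of the dual adjunction lives on $\AAa^{op}$, which is Cauchy complete because $\AAa$ is; hence Theorem~\ref{thm:main} applies verbatim to the opposite adjunction, yielding an equivalence whose opposite is the claimed $\Emm\AAa F\simeq\Karc\BBb F$. I would then check that the two functors so produced are exactly the $R$ and $L$ of the statement. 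The rule $R\beta=\beta\circ\varepsilon$ of Theorem~\ref{thm:main}, read in $\AAa^{op}$ where the comonad counit is $\eta^{op}$, becomes $R\alpha=\eta\circ\alpha$ on $\lft F A$ with carrier $\ladj F A$, and $Rg=\rgt F g$ dualizes to $Rf=\lft F f$; the splittings defining $L$ in \eqref{eq:arrowL} dualize square-by-square to those in \eqref{eq:arrowLL}, with the transposes $q',q''$ replaced by their mirror images $i',i''$ and every epi and mono interchanged.

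The one genuinely delicate point — and the only step I expect to require care — is matching the object-level conditions across the duality. In Theorem~\ref{thm:main} the objects of $\Karm\AAa F$ carry the splitting requirement $\lft FX\stackrel{\radj F\varphi}\looparrowright X$, which (as the Remark before Proposition~\ref{prop:eqproj2} explains) must be imposed explicitly precisely because the isomorphism $\varphi\cong HK\varphi$ of Proposition~\ref{prop:eqproj} is available only through \emph{compliant}, not consistent, morphisms. Dualizing this requirement turns $\lft FX\stackrel{\radj F\varphi}\looparrowright X$ into $\ladj F\radj FX\stackrel{\ladj F\varphi}\looparrowright X$, the splitting of $\ladj F\varphi$ through $X$ in $\BBb$. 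I would therefore verify that the objects of $\Karc\BBb F$ entering the equivalence are the projectors satisfying this splitting condition, so that the dual of Proposition~\ref{prop:eqproj2} (comparing the two presentations of projective, now injective, objects) still goes through on the coalgebra side. With that condition in force, the dual of the well-definedness of $L$, of the identity counit $LR\to\Id$, and of the natural isomorphism $\Id\to RL$ transport directly from the proof of Theorem~\ref{thm:main}.

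Everything else is the formal transport of the already-established Theorem~\ref{thm:main} across $(\place)^{op}$, so no new computation is needed; should one prefer to avoid duality altogether, the same steps — $R\alpha=\eta\circ\alpha$ is idempotent since $\alpha\circ\eta=\id$, $L$ is defined by splitting $\varphi$ and its image, and the two triangle identities give the counit and unit isomorphisms — can be carried out directly as a mirror image of the proof above.
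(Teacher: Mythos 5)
Your proposal is correct and is essentially the paper's own approach: the paper offers no separate argument for this theorem, explicitly leaving it as the dual of Theorem~\ref{thm:main}, which is precisely the $(\place)^{op}$ transport you carry out (including the self-duality of Cauchy completeness and the square-by-square dualization of \eqref{eq:arrowL} into \eqref{eq:arrowLL}). Your ``delicate point'' is well taken and worth keeping: dualizing $\Karm\AAa F$ does require the splitting condition $\ladj F\radj FX\stackrel{\ladj F\varphi}\looparrowright X$ on the objects of $\Karc\BBb F$, whereas the paper's displayed definitions attach that condition to $\Karco\BBb F$ instead (and with the apparent typo $\radj F\varphi$ where $\ladj F\varphi$ is meant) while omitting it from $\Karc\BBb F$, so your version of the object condition is the one under which the duality actually goes through.
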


\section{Application}\label{Sec:Application}
Interpreted along the lines of the example from Sec.~\ref{sec:example}, the category $\Karm \CCc S$ of projectors and consistent homomorphisms, induced by the state monad $\lft S$ on a cartesian closed category $\CCc$, can be viewed as a model of data release policies. The idea is that
 \begin{itemize}
\item a projector $\left(S\times A \tto \varphi S\times A\right) \in \left| \Karm \CCc S\right|$ filters private data $a\in A$ and private states $s\in S$ and releases  a public pair 
\bear
\varphi (s,a) & = & \left<\varphi_0(s,a), \varphi_1(s,a)\right> \in S\times A
\eear
\item a morphism $\left(A\tto f B\right) \in \Karm \CCc  S(\varphi, \psi)$ can be thought of as a deterministic channel which maps data of type $A$ to data of type $B$ in such a way that the following requirements are satisfied
\bea\label{eq:consone}
\psi_0\left(s, f(a)\right) & = & \varphi_0(s,a)\\
\psi_1\left(s, f(a)\right) & = & f\left(\varphi_1(s,a)\right)\label{eq:constwo}
\eea
where $\left(S\times B \tto \psi S\times B\right) \in \left|\Karm \CCc S\right|$ is another policy.
\end{itemize}
Conditions (\ref{eq:consone}--\ref{eq:constwo}) guarantee that the channel $f$ behaves consistently with the policies $\varphi$ and $\psi$. Note that this is a special case of the model from Sec.~\ref{sec:example}, in the sense that we are not capturing the consistency of a database $S\times A \tto g S\times B$ with the policies $S\times A \tto \varphi S\times A$ and $S\times B \tto \psi S\times B$, but only the consistency of a stateless channel $A\tto f B$.

If we accept this restriction for a moment, the equivalence $\Karm \CCc S \simeq \Emc \CCc S$ provides an interesting characterization of data release policies, with the consistent channels as the morphisms between them: they are equivalent to coalgebras for the comonad
\bea
\rgt S :\  \CCc & \to & \CCc\\
X & \mapsto & S\times (S\Rightarrow X) \notag
\eea
with the coalgebra homomorphisms. More precisely, a projective $\lft S$-algebra $S\times A\tto \varphi S\times A$, which we viewed as a Mealy machine \cite[Sec.~2.7(a)]{Hopcroft-Ullman:book}
\beq
S\times A \tto{\varphi_0} S \qquad \qquad S\times A \tto{\varphi_1} A
\eeq
induces a coalgebra $B \tto\beta S\times (S\Rightarrow B)$, which boils down to a pair of maps reminding of a Moore machine \cite[Sec.~2.7(b)]{Hopcroft-Ullman:book} 
\beq\label{eq:moore}
B\tto{\beta_0} S \qquad \qquad B\times S\tto{\beta_1} B
\eeq
It is conspicuous, however, that the state space $S$ of the Mealy machine $\varphi$ has become the alphabet in the corresponding Moore machine $\beta = L\varphi$, where $L$ is the functor from Thm.~\ref{thm:main}. The state space $B$ of the Moore machine $\beta$ arises from the construction of $L\varphi$ in \eqref{eq:arrowL} as the set of public pairs:
\bear
B & = & \left\{<s,a>\in S\times A\ |\ \varphi(s,a) = <s,a>\right\}
\eear
Note, however, that both machines are of a very special kind: the Mealy machine is idempotent, and the Moore machine satisfies the coalgebra conditions
\[
\varepsilon \circ \beta = \id \qquad \qquad \rgt S \beta \circ \beta = \nu \circ \beta
\]
which for the components in \eqref{eq:moore} correspond to the following equations
\bear
\beta_0\left(\beta_1(b,s)\right) & = & s\\
\beta_1\left(b, \beta_0(b)\right) & = & b\\
\beta_1\left(\beta_1(b,s),t) \right) & = & \beta_1(b,t)
\eear
In a sense (formalized by Thm.~\ref{thm:main}), these equations realize on the set of public pairs $B$ precisely the data filtering condition that was realized on the set of all pairs $S\times A$  by the idempotence of $\varphi$.

To go beyond the stateless morphisms, and capture not just channels in the form $A\tto f B$, but also databases in the form $S\times A \tto g S\times B$, consider the adjunction 
\beq\label{eq:free}
\begin{tikzar}{}
\CCc \arrow[bend left = 13]{rr}{S^\flat}[swap]{\bot} \&\& \Klm \CCc S  \arrow[bend left = 13]{ll}{S_\flat} 
\end{tikzar}
\eeq
where 
\begin{align*}
S^\flat X & = X && S_\flat X = \lft SX\\
S^\flat f & = (S\times f) && S_\flat f  = (S\Rightarrow f)
\end{align*}
The category of injective coalgebras in $\Klm \CCc S$ is now as follows:
\bear
|\Klm \CCc S ^\looparrowright| & = & \coprod_{A\in |\CCc|} \Big\{\varphi \in \CCc(\lft S A, \lft S A)\ \Big| \ \varphi\circ \varphi = \varphi\Big\}\\
\Klm \CCc S ^\looparrowright (\varphi, \psi) & = & \left\{g \in \CCc(S\times A, S\times B)\ \Big|\ \ 
\begin{tikzar}[row sep=1.8em,column sep=2.2em]
\lft S A \ar{r}{S\Rightarrow g} \ar{d}{\varphi} \& \lft S B  \ar{d}[swap]{\psi}  \\ 
\lft S A \ar{r}[swap]{S\Rightarrow g}\& \lft S B
\end{tikzar}\ 
\right\}
\eear
The consistent homomorphisms $S\times A \tto g S\times B$ can now be construed as databases. The policies with which they are consistent are more general than those considered so far. Policies $S\times A\to S\times A$ in $\Karm \CCc S$ filtered private states and data and supplied the corresponding public pairs. Policies $\lft S A\to \lft S A$ in $\Klm \CCc S ^\looparrowright$ filter entire stateful behaviors. 

Interestingly, though, Thm.~\ref{thm:comain} provides the equivalence $\Klm \CCc S ^\looparrowright \simeq \Emm \CCc S$, and one of its corollaries\footnote{to be presented elsewhere} provides an equivalence $\Emm \CCc S \simeq \CCc$. The equivalences are nontrivial, and may require further research. They suggest that implementing policies within a model of data release, and using these policies to filter out the private data, and to extract the public data alone, leads to an equivalent model, but this time consisting of the public data alone. Filtering out the private data can thus be formalized as an equivalence. Privacy policies can be formalized to make the publicly released data structurally indistinguishable from all data.

\section{Related and further work}\label{Sec:Closing}
Although the presented constructions emerged within a practice-driven effort towards modeling and analyzing data release policies using the salient tools of monadic programming, the research path led through the realm of basic monad theory, with some old questions still lurking, and with the theoretic repercussions surpassing not only our practical goals, but probably also our current understanding. 
Back in 1968, in the first of the Batelle volumes,   Barr \cite{BarrM:algCoalg} raised the question of comonadicity of the left adjoint of a monadic functor. More precisely, he considered the adjunctions in the form
\beq\label{eq:barr}
\begin{tikzar}{}
\AAa \arrow[bend left = 13]{rr}{T^\sharp}[swap]{\bot} \&\& \Emm \AAa T  \arrow[bend left = 13]{ll}{T_\sharp} 
\end{tikzar}
\eeq
and asked under which conditions would the functor $T^\sharp$ be comonadic. This means that the comparison functor
\[
\begin{tikzar}{}
\AAa \arrow{r} \& \Emc{\left(\Emm \AAa T\right)} T  \end{tikzar}
\]
should be an equivalence, where $\rgt T =  T^\sharp T_\sharp$. Barr provided the answer for the special cases when $\AAa$ is the category of sets, pointed or not, and when it is the category of vector spaces and linear operators. He suggested that the general answer might be difficult. 
The question seems to have been reemerging regularly in various guises, most recently in Jacobs' work on coalgebras over algebras as an abstract form of the concept of basis \cite{JacobsB:bases}, extending the results of \cite{PavlovicD:MSCS13} about bases as Sweedler-style coalgebras to bases as coalgebras for comonads.  

Theorem~\ref{thm:main} provides the equivalence $\Emc \BBb F \simeq \Karm \AAa F$ for \emph{any}\/ resolution $\adj F :\BBb \to \AAa$ of the monad $\lft F$. More precisely, if besides the adjunction $\adj F :\BBb \to \AAa$ there is also $\adj G : \CCc \to \AAa$, and the monads $\radj G \ladj G \cong \radj F \ladj F$ are isomorphic to a monad $T:\AAa \to \AAa$, coherently with respect to the monad structures \cite[Sec.~3.6]{BarrM:TTT}, then
\[ \Emc \BBb F\ \simeq\ \Karm \AAa T\ \simeq\ \Emc \CCc G \]
Instantiating to the adjunction displayed in \eqref{eq:barr} gives
\bear
\Emc{\left(\Emm \AAa T\right)} T &\simeq & \Karm \AAa T
\eear
The question of comonadicity of $T^\sharp$ can thus be studied on the comparison functor
\[
\begin{tikzar}{}
\AAa \arrow{r} \& \Karm \AAa T  \end{tikzar}
\]
which the reader may enjoy as an exercise. In a similar way, Beck's General Monadicity Theorem \cite[Thm.~3.3.13]{BeckJ:thesis,BarrM:TTT} can be stated and proved by unravelling the projectors from behind the split coequalizers in the original formulation. In general, the projector view of algebras and coalgebras, opened by Theorems~\ref{thm:main} and \ref{thm:comain}, seems to facilitate analyses of monadicity, and even enable analysis of relative monadicity \cite{AltenkirchT:relative-formalized,AltenkirchT:relative-monads}. On the other hand, it opens up an alley towards classifying resolutions in general, as illustrated in the following diagram.
\[
\begin{tikzar}[row sep=4em,column sep=3em]
\& \& \AAa \arrow[bend right = 12]{d}{\dashv}[swap]{\ladj F} \arrow{dr}
\\
\&\Karco \BBb F 
\arrow[bend right = 12]{d}{\dashv}[swap]{F^\circlearrowleft} \arrow{ur}
\& \BBb \arrow{dr}
\arrow[bend right = 12]{u}[swap]{\radj F} \& 
\Karm \AAa F  \arrow[bend right = 12]{d}{\dashv}[swap]{F^\bullet} \arrow[leftrightarrow,thick]{dr}[description]{\simeq}  
\\
\Klc \BBb F \arrow[bend right = 12]{d}{\dashv}[swap]{\lkadj F} \arrow[hook]{ur}
\& \Karmo \AAa F 
\arrow{ur}
\arrow[bend right = 12]{u}[swap]{F_\circlearrowleft} \& \& \Karc \BBb F \arrow[bend right = 12]{u}[swap]{F_\bullet} \arrow[leftrightarrow,thick]{dr}[description]{\simeq} 
\& \Emc \BBb F  \arrow[bend right = 12]{d}{\dashv}[swap]{\lnadj{F}}
\\
\Klm \AAa F \arrow[hook]{ur}
\arrow[bend right = 12]{u}[swap]{\rkadj F} \&\& \&\& \Emm \AAa F \arrow[bend right = 12]{u}[swap]{\rnadj{F}}
\end{tikzar}
\]
Of special interest here is the adjunction $\left(F^\bullet \dashv F_\bullet\right): \Karc \BBb F \to \Karm \AAa F$ defined by
\[
\prooftree
\left<A,\ \ladj F A\tto \varphi \ladj F A\right> \in \big| \Karm \AAa F \big|
\justifies
\left<\ladj F \radj F B_\varphi,\ \  \radj F \ladj F \radj F B_\varphi \tto{\radj F \varepsilon} \radj F B_\varphi \tto{\ \eta\ } \radj F \ladj F \radj F B_\varphi\right> \in \big| \Karc \BBb F \big|
\endprooftree 
\]
and
\[
\prooftree
\left<D,\ \radj F D\tto \psi \radj F D\right> \in \big| \Karc \BBb F \big|
\justifies
\left< \radj F \ladj F C_\psi,\ \  \ladj F \radj F \ladj F C_\psi \tto{\ \varepsilon\ } \ladj F C_\psi \tto{\ladj F \eta } \ladj F \radj F \ladj F C_\psi\right>\in \big| \Karm \AAa F \big|
\endprooftree 
\]
where $\ladj FA \epi B_\varphi \mono \ladj FA$ and $\radj FD \epi C_\psi \mono \radj FD$ are the splittings of $\varphi$ and of $\psi$, respectively. The adjunction $\left(F^\bullet \dashv F_\bullet\right): \Karc \BBb F \to \Karm \AAa F$ is the \emph{nucleus}\/ of the adjunction $\adj F:\BBb\to \AAa$ \cite{PavlovicD:ICFCA12,PavlovicD:Samson13,PavlovicD:CALCO15,PavlovicD:nucl,WillertonS:tight,WillertonS:legendre}. The fact that the functor $F_\bullet$ is monadic, and the functor $F^\bullet$ is comonadic will be proved in the full version of this paper.

\bibliographystyle{plain}
\bibliography{mond-ref,CT,PavlovicD}

\begin{thebibliography}{10}

\bibitem{Altenkirch-Ghani}
Michael~Gordon Abbott, Thorsten Altenkirch, Neil Ghani, and Conor McBride.
\newblock Constructing polymorphic programs with quotient types.
\newblock In Dexter Kozen and Carron Shankland, editors, {\em Mathematics of
  Program Construction, 7th International Conference, {MPC} 2004, Stirling,
  Scotland, UK, July 12-14, 2004, Proceedings}, volume 3125 of {\em Lecture
  Notes in Computer Science}, pages 2--15. Springer, 2004.

\bibitem{UustaluT:container-comonad}
Danel Ahman, James Chapman, and Tarmo Uustalu.
\newblock When is a container a comonad?
\newblock {\em Logical Methods in Computer Science}, 10(3), 2014.

\bibitem{AltenkirchT:LICS99}
Thorsten Altenkirch.
\newblock Extensional equality in intensional type theory.
\newblock In {\em 14th Annual {IEEE} Symposium on Logic in Computer Science,
  Trento, Italy, July 2-5, 1999}, pages 412--420. {IEEE} Computer Society,
  1999.

\bibitem{AltenkirchT:relative-formalized}
Thorsten Altenkirch, James Chapman, and Tarmo Uustalu.
\newblock Relative monads formalised.
\newblock {\em J. Formalized Reasoning}, 7(1):1--43, 2014.

\bibitem{AltenkirchT:relative-monads}
Thorsten Altenkirch, James Chapman, and Tarmo Uustalu.
\newblock Monads need not be endofunctors.
\newblock {\em Logical Methods in Computer Science}, 11(1), 2015.

\bibitem{AltenkirchT:POPL16}
Thorsten Altenkirch and Ambrus Kaposi.
\newblock Type theory in type theory using quotient inductive types.
\newblock In Rastislav Bod{\'{\i}}k and Rupak Majumdar, editors, {\em
  Proceedings of the 43rd Annual {ACM} {SIGPLAN-SIGACT} Symposium on Principles
  of Programming Languages, {POPL} 2016, St. Petersburg, FL, USA, January 20 -
  22, 2016}, pages 18--29. {ACM}, 2016.

\bibitem{GrothendieckA:SGA4}
Michael Artin, Alexander Grothendieck, and Jean-Louis Verdier, editors.
\newblock {\em S{\'e}minaire de G{\'e}ometrie Alg{\'e}brique: Th\'eorie des
  Topos et Cohomologie \'Etale des Schemas (SGA 4)}, number 269,270,305 in
  Lecture Notes in Mathematics. Springer-Verlag, 1964.
\newblock Second edition, 1972.

\bibitem{AwodeyS:LICS-HOTT}
Steven Awodey, Nicola Gambino, and Kristina Sojakova.
\newblock Inductive types in homotopy type theory.
\newblock In {\em Proceedings of the 27th Annual {IEEE} Symposium on Logic in
  Computer Science, {LICS} 2012, Dubrovnik, Croatia, June 25-28, 2012}, pages
  95--104. {IEEE} Computer Society, 2012.

\bibitem{BarrM:algCoalg}
Michael Barr.
\newblock Coalgebras in a category of algebras.
\newblock In {\em Category Theory, Homology Theory and their Applications I},
  volume~86 of {\em Lecture Notes in Mathematics}, pages 1--12. Springer Berlin
  Heidelberg, 1969.

\bibitem{BarrM:TTT}
Michael Barr and Charles Wells.
\newblock {\em Toposes, Triples, and Theories}.
\newblock Number 278 in Grundlehren der mathematischen {W}issenschaften.
  Springer-Verlag, 1985.
\newblock Republished in: Reprints in Theory and Applications of Categories,
  No. 12 (2005) pp. 1-287.

\bibitem{BartheG:setoids}
Gilles Barthe, Venanzio Capretta, and Olivier Pons.
\newblock Setoids in type theory.
\newblock {\em J. Funct. Program.}, 13(2):261--293, 2003.

\bibitem{BeckJ:thesis}
Jonathan~Mock Beck.
\newblock {\em Triples, Algebras and Cohomology}.
\newblock PhD thesis, Columbia University, 1967.
\newblock Reprinted in \emph{Theory and Applications of Categories}, No. 2,
  2003, pp. 1--59.

\bibitem{BentonN:impact}
Nick Benton.
\newblock Categorical monads and computer programming.
\newblock {\em LMS Impact150 Stories}, 1:9--13, 2015.

\bibitem{BentonN:monads}
Nick Benton, John Hughes, and Eugenio Moggi.
\newblock Monads and effects.
\newblock In Gilles Barthe, Peter Dybjer, Luis Pinto, and Jo{\~{a}}o Saraiva,
  editors, {\em Applied Semantics, International Summer School, {APPSEM} 2000,
  Caminha, Portugal, September 9-15, 2000, Advanced Lectures}, volume 2395 of
  {\em Lecture Notes in Computer Science}, pages 42--122. Springer, 2000.

\bibitem{Mealy-bonsague}
Marcello~M. Bonsangue, Jan J. M.~M. Rutten, and Alexandra Silva.
\newblock Coalgebraic logic and synthesis of mealy machines.
\newblock In Roberto~M. Amadio, editor, {\em Foundations of Software Science
  and Computational Structures, 11th International Conference, {FOSSACS} 2008,
  Held as Part of the Joint European Conferences on Theory and Practice of
  Software, {ETAPS} 2008, Budapest, Hungary, March 29 - April 6, 2008.
  Proceedings}, volume 4962 of {\em Lecture Notes in Computer Science}, pages
  231--245. Springer, 2008.

\bibitem{BorceuxF:handbook}
Francis Borceux.
\newblock {\em Handbook of Categorical Algebra}.
\newblock Number~50 in Encyclopedia of Mathematics and its Applications.
  Cambridge University Press, 1994.
\newblock Three volumes.

\bibitem{BrookesGeva}
S.~Brookes and S.~Geva.
\newblock Computational comonads and intensional semantics.
\newblock In M.P. Fourman, P.T. Johnstone, and A.M. Pitts, editors, {\em
  Categories in Computer Science}, pages 1--44. Cambridge University Press,
  1992.

\bibitem{PavlovicD:MSCS13}
Bob Coecke, Dusko Pavlovic, and Jamie Vicary.
\newblock A new description of orthogonal bases.
\newblock {\em Math. Structures in Comp. Sci.}, 23(3):555--567, 2013.
\newblock arxiv.org:0810.0812.

\bibitem{Godement}
Roger Godement.
\newblock {\em Topologie Algebrique et Theorie des Faisceaux}.
\newblock Hermann, 1958.

\bibitem{ADJ-intro}
Joseph~A Goguen, James~W Thatcher, Eric~G Wagner, and Jesse~B Wright.
\newblock An introduction to categories, algebraic theories and algebras.
\newblock Technical report, IBM Thomas J. Watson Research Division, 1975.

\bibitem{ADJ-JACM}
Joseph~A Goguen, James~W Thatcher, Eric~G Wagner, and Jesse~B Wright.
\newblock Initial algebra semantics and continuous algebras.
\newblock {\em Journal of the ACM (JACM)}, 24(1):68--95, 1977.

\bibitem{Gratzer:book}
G.~Gr{\"a}tzer.
\newblock {\em Universal Algebra}.
\newblock Mathematics and Statistics. Springer New York, 2008.

\bibitem{Mealy-helle}
Helle~Hvid Hansen, David Costa, and Jan J. M.~M. Rutten.
\newblock Synthesis of mealy machines using derivatives.
\newblock {\em Electr. Notes Theor. Comput. Sci.}, 164(1):27--45, 2006.

\bibitem{HofmannM:quotient}
Martin Hofmann.
\newblock A simple model for quotient types.
\newblock In Mariangiola Dezani{-}Ciancaglini and Gordon~D. Plotkin, editors,
  {\em Typed Lambda Calculi and Applications, Second International Conference
  on Typed Lambda Calculi and Applications, {TLCA} '95, Edinburgh, UK, April
  10-12, 1995, Proceedings}, volume 902 of {\em Lecture Notes in Computer
  Science}, pages 216--234. Springer, 1995.

\bibitem{Hofmann-Streicher}
Martin Hofmann and Thomas Streicher.
\newblock The groupoid model refutes uniqueness of identity proofs.
\newblock In {\em Proceedings of the Ninth Annual Symposium on Logic in
  Computer Science {(LICS} '94), Paris, France, July 4-7, 1994}, pages
  208--212. {IEEE} Computer Society, 1994.

\bibitem{Hopcroft-Ullman:book}
John~E. Hopcroft and Jeffrey~D. Ullman.
\newblock {\em Introduction to automata theory, languages, and computation}.
\newblock Addison-Wesley, 1979.

\bibitem{JacobsB:bases}
Bart Jacobs.
\newblock Bases as coalgebras.
\newblock {\em Logical Methods in Computer Science}, 9(3), 2013.

\bibitem{JohnstoneP:stone}
Peter Johnstone.
\newblock {\em Stone Spaces}.
\newblock Number~3 in Cambridge Studies in Advanced Mathematics. Cambridge
  University Press, 1982.

\bibitem{PavlovicD:nucl}
Toshiki Kataoka and Dusko Pavlovic.
\newblock {Nuclei of adjunctions}, December 2015.
\newblock In progress.

\bibitem{PavlovicD:CALCO15}
Toshiki Kataoka and Dusko Pavlovic.
\newblock {Towards Concept Analysis in Categories: Limit Inferior as Algebra,
  Limit Superior as Coalgebra}.
\newblock In Lawrence~S. Moss and Pawel Sobocinski, editors, {\em Proceedings
  of CALCO 2015}, volume~35 of {\em LIPIcs}, pages 130--155, Dagstuhl, Germany,
  2015. Leibniz-Zentrum f\"ur Informatik.
\newblock arxiv.org:1505.01098.

\bibitem{PowerJ:setoid}
Yoshiki Kinoshita and John Power.
\newblock Category theoretic structure of setoids.
\newblock {\em Theor. Comput. Sci.}, 546:145--163, 2014.

\bibitem{KleisliH}
Heinrich Kleisli.
\newblock Every standard construction is induced by a pair of adjoint functors.
\newblock {\em Proceedings of the American Mathematical Society}, pages
  544--546, 1965.

\bibitem{LintonF:LaJolla}
Fred~EJ Linton.
\newblock Some aspects of equational categories.
\newblock In {\em Proceedings of the Conference on Categorical Algebra held in
  La Jolla}, pages 84--94. Springer, 1966.

\bibitem{ManesE:algt}
Ernest Manes.
\newblock {\em Algebraic Theories}.
\newblock Number~26 in Graduate Texts in Mathematics. Springer-Verlag, 1976.

\bibitem{MoggiE:monads}
Eugenio Moggi.
\newblock Notions of computation and monads.
\newblock {\em Information and Computation}, 93:55--92, 1991.

\bibitem{PareR:absolute}
Robert Par\'e.
\newblock On absolute colimits.
\newblock {\em J. Alg.}, 19:80--95, 1971.

\bibitem{PavlovicD:ICFCA12}
Dusko Pavlovic.
\newblock {Quantitative Concept Analysis}.
\newblock In Florent Domenach, Dmitry~I. Ignatov, and Jonas Poelmans, editors,
  {\em Proceedings of ICFCA 2012}, volume 7278 of {\em Lecture Notes in
  Artificial Intelligence}, pages 260--277. Springer Verlag, 2012.
\newblock arXiv:1204.5802.

\bibitem{PavlovicD:Samson13}
Dusko Pavlovic.
\newblock {Bicompletions of distance matrices}.
\newblock In Bob Coecke, Luke Ong, and Prakash Panangaden, editors, {\em
  {Computation, Logic, Games and Quantum Foundations. The Many Facets of Samson
  Abramsky}}, volume 7860 of {\em Lecture Notes in Computer Science}, pages
  291--310. Springer Verlag, 2013.

\bibitem{PavlovicD:privacy}
Dusko Pavlovic and Peter-Michael Seidel.
\newblock {Privacy networks: {Trust, noninterference, differential privacy}}.
\newblock Technical report, ASECOLab, February 2016.
\newblock working paper, available at dusko.org.

\bibitem{hottbook}
The {Univalent Foundations Program}.
\newblock {\em Homotopy Type Theory: Univalent Foundations of Mathematics}.
\newblock {\tt homotopytypetheory.org/book}, Institute for Advanced Study,
  2013.

\bibitem{UustaluT:comonadic}
Tarmo Uustalu and Varmo Vene.
\newblock Comonadic notions of computation.
\newblock {\em Electr. Notes Theor. Comput. Sci.}, 203(5):263--284, 2008.

\bibitem{WadlerP:monads}
Philip Wadler.
\newblock Monads for functional programming.
\newblock In Johan Jeuring and Erik Meijer, editors, {\em Advanced Functional
  Programming, First International Spring School on Advanced Functional
  Programming Techniques, B{\aa}stad, Sweden, May 24-30, 1995, Tutorial Text},
  volume 925 of {\em Lecture Notes in Computer Science}, pages 24--52.
  Springer, 1995.

\bibitem{WillertonS:tight}
Simon Willerton.
\newblock Tight spans, isbell completions and semi-tropical modules.
\newblock {\em Theory and Applications of Categories}, 28(22):696--732, 2013.

\bibitem{WillertonS:legendre}
Simon Willerton.
\newblock The {Legendre-Fenchel} transform from a category theoretic
  perspective, 2015.
\newblock arXiv:1501.03791.

\end{thebibliography}

\appendix

\section{Appendix: Adjunctions, monads, comonads}
\begin{definition}
An adjunction $F = \left(\adj F : \BBb\to \AAa\right)$ consists of the following data
\begin{itemize}
\item categories $\AAa$ and $\BBb$
\item functors $\ladj F: \AAa \to \BBb$ and $\radj F: \BBb\to \AAa$, called \emph{left adjoint} and \emph{right adjoint}, respectively
\item natural transfromations $\eta: \id_\AAa \to \radj F \ladj F$ and $\varepsilon: \ladj F \radj F \to \id_\BBb$, called the \emph{adjunction unit} and \emph{counit}, respectively,
\end{itemize}
such that
\begin{alignat*}{5}
\begin{tikzar}
\AAa \arrow{d}{\ladj{F}} 
\arrow[bend right=75]{dd}[swap]{\id}[name=L]{} 
\\
\BBb \arrow{d}{\radj F} 
\arrow[bend left=75]{dd}{\id}[name=R,swap]{}
\arrow[Leftarrow,to path = -- (L)\tikztonodes]{}[swap]{\eta} 
\\ \AAa \arrow{d}{\ladj{F}} \arrow[Rightarrow,to path = -- (R)\tikztonodes]{}{\varepsilon}
\\
\BBb
\end{tikzar}
& =\ \  \begin{tikzar}[row sep=2.8em]
\AAa \arrow{ddd}{\ladj{F}} 
\\
\\ 
\\
\BBb
\end{tikzar}
&\qquad \qquad & 
\begin{tikzar}
\BBb \arrow{d}{\radj{F}} \arrow[bend left=75]{dd}{\id}[name=R,swap]{}\\
\AAa \arrow{d}{\ladj F} 
\arrow[bend right=75]{dd}[swap]{\id}[name=L]{} 
\arrow[Rightarrow,to path = -- (R)\tikztonodes]{}{\varepsilon}
\\ \BBb 
\arrow{d}{\radj{F}} 
\arrow[Leftarrow,to path = -- (L)\tikztonodes]{}[swap]{\eta} 
\\
\AAa
\end{tikzar}
&\ \  =\ \  \begin{tikzar}[row sep=2.8em]
\BBb \arrow{ddd}{\radj{F}} 
\\
\\ 
\\
\AAa
\end{tikzar}
\end{alignat*}
\end{definition}

\begin{definition}
A monad $(T,\eta, \mu)$ consists of the following data
\begin{itemize}
\item category $\AAa$
\item functor $T: \AAa \to \AAa$ 
\item natural transformations $\eta: \id_\AAa \to T$ and $\mu: TT\to T$, called, respectively, the  \emph{monad unit} and \emph{evaluation} (or \emph{cochain})
\end{itemize}
which satisfy the following conditions
\begin{alignat*}{3}
\begin{tikzar}[row sep=2.8em,column sep=2.8em]
\AAa \arrow{d}[swap]{T} \arrow[phantom]{}[name=U,below=5]{}
\&
\AAa 
\arrow{l}[swap]T 
\arrow{dd}{T}[name=R,swap]{} 
\arrow{dl}{T}[name=D,swap]{}
\arrow[Rightarrow,to path =(U) -- (D)\tikztonodes]{}[swap]{\mu} 
\\
\AAa \arrow{dr}[swap]{T} 
\arrow[Rightarrow,to path = -- (R)\tikztonodes]{}{\mu} 
\\ \& \AAa 
\end{tikzar}
&\ \  =\ \  \begin{tikzar}[row sep=2.8em,column sep=2.8em]
\&
\AAa 
\arrow{dd}{T}[name=R,swap]{} 
\arrow{dl}[swap]{T}
\\
\AAa \arrow{dr}{T}[name=D,swap]{}
\arrow{d}[swap]{T} \arrow[Rightarrow,to path = -- (R)\tikztonodes]{}{\mu} 
\\ 
\AAa \arrow{r}[swap]T 
\arrow[phantom]{}[name=U,above=5]{}
\arrow[Rightarrow,to path =(U) -- (D)\tikztonodes]{}{\mu} 
\& \AAa 
\end{tikzar}
\end{alignat*}

\begin{alignat*}{3}
\begin{tikzar}[row sep=2.8em,column sep=2.8em]
\&
\AAa 
\arrow[bend right = 75]{dl}[swap]{\id}[name=U]{} 
\arrow{dd}{T}[name=R,swap]{} 
\arrow{dl}{T}[name=D,swap]{}
\arrow[Rightarrow,to path =(U) -- (D)\tikztonodes]{}[swap]{\eta} 
\\
\AAa \arrow{dr}[swap]{T} 
\arrow[Rightarrow,to path = -- (R)\tikztonodes]{}{\mu} 
\\ \& \AAa 
\end{tikzar}
&\ \  \ \ =\ \  \begin{tikzar}[row sep=2.4em]
\AAa \arrow{ddd}{T} 
\\
\\ 
\\
\AAa
\end{tikzar}
&\ \  \ \ =\ \ \begin{tikzar}[row sep=2.8em,column sep=2.8em]
\&
\AAa 
\arrow{dd}{T}[name=R,swap]{} 
\arrow{dl}[swap]{T}\\
\AAa 
\arrow[bend right = 75]{dr}[swap]{\id}[name=U]{} 
\arrow{dr}{T}[name=D,swap]{}
\arrow[Rightarrow,to path =(U) -- (D)\tikztonodes]{}{\eta}  
\arrow[Rightarrow,to path = -- (R)\tikztonodes]{}{\mu} 
\\ \& \AAa 
\end{tikzar}
\end{alignat*}
\end{definition}

\begin{definition}
A comonad $(S,\varepsilon, \nu)$ consists of the following data
\begin{itemize}
\item category $\AAa$
\item functor $S: \AAa \to \AAa$ 
\item natural transfromations $\varepsilon : S\to  \id_\AAa$ and $\nu: S\to SS$, called the  \emph{comonad counit} and \emph{coevaluation} (or \emph{chain}), respectively,
\end{itemize}
which satisfy the following conditions
\begin{alignat*}{3}
\begin{tikzar}[row sep=2.8em,column sep=2.8em]
\AAa \arrow{d}[swap]{S} \arrow[phantom]{}[name=U,below=5]{}
\&
\AAa 
\arrow{l}[swap]S 
\arrow{dd}{S}[name=R,swap]{} 
\arrow{dl}{S}[name=D,swap]{}
\arrow[Leftarrow,to path =(U) -- (D)\tikztonodes]{}[swap]{\nu} 
\\
\AAa \arrow{dr}[swap]{S} 
\arrow[Leftarrow,to path = -- (R)\tikztonodes]{}{\nu} 
\\ \& \AAa 
\end{tikzar}
&\ \  =\ \  \begin{tikzar}[row sep=2.8em,column sep=2.8em]
\&
\AAa 
\arrow{dd}{S}[name=R,swap]{} 
\arrow{dl}[swap]{S}
\\
\AAa \arrow{dr}{S}[name=D,swap]{}
\arrow{d}[swap]{S} \arrow[Leftarrow,to path = -- (R)\tikztonodes]{}{\nu} 
\\ 
\AAa \arrow{r}[swap]S 
\arrow[phantom]{}[name=U,above=5]{}
\arrow[Leftarrow,to path =(U) -- (D)\tikztonodes]{}{\nu} 
\& \AAa 
\end{tikzar}
\end{alignat*}

\begin{alignat*}{3}
\begin{tikzar}[row sep=2.8em,column sep=2.8em]
\&
\AAa 
\arrow[bend right = 75]{dl}[swap]{\id}[name=U]{} 
\arrow{dd}{S}[name=R,swap]{} 
\arrow{dl}{S}[name=D,swap]{}
\arrow[Leftarrow,to path =(U) -- (D)\tikztonodes]{}[swap]{\varepsilon} 
\\
\AAa \arrow{dr}[swap]{S} 
\arrow[Leftarrow,to path = -- (R)\tikztonodes]{}{\nu} 
\\ \& \AAa 
\end{tikzar}
&\ \  \ \ =\ \  \begin{tikzar}[row sep=2.4em]
\AAa \arrow{ddd}{S} 
\\
\\ 
\\
\AAa
\end{tikzar}
&\ \  \ \ =\ \ \begin{tikzar}[row sep=2.8em,column sep=2.8em]
\&
\AAa 
\arrow{dd}{S}[name=R,swap]{} 
\arrow{dl}[swap]{S}\\
\AAa 
\arrow[bend right = 75]{dr}[swap]{\id}[name=U]{} 
\arrow{dr}{S}[name=D,swap]{}
\arrow[Leftarrow,to path =(U) -- (D)\tikztonodes]{}{\varepsilon}  
\arrow[Leftarrow,to path = -- (R)\tikztonodes]{}{\nu} 
\\ \& \AAa 
\end{tikzar}
\end{alignat*}

\end{definition}

\begin{definition} The Kleisli construction maps the monad $T:\AAa\to \AAa$ to the adjunction $\TKL T = \left(\kadj{T}: \Klm \AAa {T} \to \AAa\right)$ where the category $\Klm \AAa {T}$ consists of  
\begin{itemize}
\item \emph{free algebras} as objects, which boil down to $|\Klm \AAa T|\  = \ |\AAa|$; 
\item \emph{algebra homomorphisms} as arrows, which boil down to $\Klm \AAa T(x,x') = \AAa(x,Tx')$;
\end{itemize}
with the composition
\bear
\Klm \AAa T (x,x') \times \Klm \AAa T(x',x'') & \stackrel \circ \longrightarrow & \Klm \AAa T(x,x'')\\
\big < x \tto f Tx'\, ,\ x'\tto g Tx'' \big> &\longmapsto & \big(x\tto f Tx' \tto {Tg} TTx'' \tto \mu Tx'' \big)
\eear
and with the identity on $x$ induced by the monad unit $\eta : x \to Tx$
\end{definition}

\begin{definition} The Eilenberg-Moore construction maps the monad $T:\AAa\to \AAa$ to the adjunction $\TEM T = \left(\nadj{T}: \Emm \AAa {T} \to \AAa\right)$ where the category $\Emm \AAa {T}$ consists of 

\begin{itemize}
\item \emph{all algebras}\/ as objects: 
\bear |\Emm \AAa {T}| &  = & \sum_{x\in |\AAa|} \big\{\alpha \in \AAa(Tx, x)\ |\ \alpha\circ \eta = \id\ \wedge \ \alpha\circ T\alpha = \alpha \circ \mu \big\}\eear
\item \emph{algebra homomorphisms}\/ as arrows: 
\bear \Emm \AAa T (Tx\tto \alpha x, Tx'\tto \gamma x') & = & \big\{f\in \AAa(x,x')\ |\ f\circ \alpha = \gamma\circ Tf \big\}\eear
\end{itemize}
\end{definition}

\section{Appendix: Split equalizers}
Split equalizers and coequalizers  \cite{BarrM:TTT,BeckJ:thesis}  are conventionally written as \emph{partially}\/ commutative diagrams: the straight arrows commute, the epi-mono splittings compose to identities on the quotient side, and to equal idempotents on the other side.

\begin{proposition}
Consider the split equalizer diagram
\[
\begin{tikzar}{}
A
\arrow[tail]{rr}{i}  
\&\&  
B
\arrow[shift left=.75ex]{rr}{f}
\arrow[tail,shift right=.75ex]{rr}[swap]{j}
\arrow[bend left = 25,two heads]{ll}{q}
\&\&  
C
\arrow[shift left = 1ex,bend left = 35,two heads]{ll}{r} 
\end{tikzar}
\]
where
\[q\circ i = \id_A \qquad   r \circ j = \id_B \qquad f\circ r\circ f = j\circ r\circ f \]
Then 
\begin{itemize}
\item $r\circ f$ is idempotent and 
\item $i$ is the equalizer of $f$ and $j$ if and only if $i\circ q = r\circ f$.
\end{itemize}
\end{proposition}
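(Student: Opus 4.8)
The plan is to treat the two bullets separately, routing almost all of the work through a single observation that trades the parallel pair $(f,j)$ for the lone endomorphism $r\circ f$. Idempotence of $r\circ f$ I would dispatch first, by direct substitution:
\[ (r\circ f)\circ(r\circ f) \;=\; r\circ(f\circ r\circ f) \;=\; r\circ(j\circ r\circ f) \;=\; (r\circ j)\circ(r\circ f) \;=\; r\circ f, \]
using the hypothesis $f\circ r\circ f = j\circ r\circ f$ together with $r\circ j = \id_B$.

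The engine for the second bullet is the equivalence, valid for every $h\colon X\to B$,
\[ f\circ h = j\circ h \quad\Longleftrightarrow\quad (r\circ f)\circ h = h. \]
The forward implication applies $r$ on the left and cancels with $r\circ j=\id_B$; the reverse implication precomposes $f\circ r\circ f = j\circ r\circ f$ with $h$ and simplifies using $(r\circ f)\circ h=h$. This identifies the equalizer of $f$ and $j$ with the equalizer of the idempotent $r\circ f$ and $\id_B$, i.e.\ with the fixed points of $r\circ f$, which brings the whole statement into the scope of the idempotent/equalizer correspondence of Prop.~\ref{prop:projectors}.

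For the \emph{if} direction I would assume $i\circ q = r\circ f$ and read off that $i$ is the equalizer. Equalization is immediate, since $(r\circ f)\circ i = (i\circ q)\circ i = i\circ(q\circ i)=i$, which by the engine yields $f\circ i=j\circ i$. For universality, any $h$ with $f\circ h=j\circ h$ satisfies $h=(r\circ f)\circ h=(i\circ q)\circ h = i\circ(q\circ h)$, so it factors through $i$, and uniqueness of the factor is just the fact that $i$ is (split) monic. This half is entirely formal once the engine is in place.

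The \emph{only if} direction is where I expect the real difficulty. Assuming $i$ is the equalizer, the identity $f\circ r\circ f=j\circ r\circ f$ says precisely that $r\circ f$ equalizes $f$ and $j$, so it factors uniquely through $i$; composing with $q\circ i=\id_A$ forces that factor to be $q\circ r\circ f$, whence $r\circ f = i\circ(q\circ r\circ f)$. The desired equality $i\circ q=r\circ f$ therefore reduces to showing $q\circ r\circ f = q$, that is, that the \emph{given} retraction $q$ agrees with the retraction the equalizer splitting supplies for the idempotent $r\circ f$. This coincidence is the main obstacle: a split mono can admit several retractions, so it cannot follow from $q\circ i=\id_A$ by itself and must be squeezed out of the full universal property together with the remaining split-equalizer compatibilities relating $q$ and $r$. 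Establishing this last identification is the step I would spend the most care on, and the one I would scrutinise for whether an additional hypothesis tying $q$ to $r$ is silently being used.
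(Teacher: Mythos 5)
Your proof of the first bullet and of the \emph{if} direction of the second is correct and complete: the ``engine'' equivalence $f\circ h=j\circ h\iff (r\circ f)\circ h=h$ is exactly the right reduction, and together with $i\circ q=r\circ f$ it yields both the equalizing property $f\circ i=j\circ i$ and the universal factorization $h=i\circ(q\circ h)$, with uniqueness coming from $i$ being split monic. (Note that the paper states this appendix proposition without proof, so there is no official argument to compare against; your route is the natural one.)

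Your suspicion about the \emph{only if} direction is not merely a point to be careful about: the residual obligation $q\circ r\circ f=q$ that you isolate is genuinely underivable, and the \emph{only if} direction is false under the three listed hypotheses. Take $\AAa=\Set$, $A=\{a_0,a_1\}$, $B=C=\{0,1,2\}$, $j=\id_B$, $r=\id_C$, $f$ the idempotent with $f(0)=0$, $f(1)=1$, $f(2)=0$, and $i(a_0)=0$, $i(a_1)=1$. Then $i$ is the equalizer of $f$ and $j$, and $q\circ i=\id_A$, $r\circ j=\id_B$, $f\circ r\circ f=j\circ r\circ f$ all hold for \emph{any} retraction $q$ of $i$; choosing $q(2)=a_1$ gives $(i\circ q)(2)=1\neq 0=(r\circ f)(2)$. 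As you observe, the factorization of $r\circ f$ through the equalizer is forced to be $q\circ r\circ f$, but nothing in the hypotheses ties the freely chosen retraction $q$ to $r$ and $f$. The statement becomes true if one either adds the coherence condition $q=q\circ r\circ f$ (equivalently, requires $q$ to be the retraction produced by splitting the idempotent $r\circ f$), or reads ``$i$ is the equalizer'' as ``the displayed data form a split equalizer,'' in which case $i\circ q=r\circ f$ holds by definition. So your two completed parts stand, and the step you flagged for scrutiny is precisely where the claim, as literally stated, breaks.
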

\end{document}